\newtheorem{theorem}{\bf Theorem}
\newtheorem{corollary}{\bf Corollary}
\newcommand{\xh}{\mathbf{h}}
\newcommand{\xH}{\mathbf{H}}
\newcommand{\xX}{\mathbf{X}}
\newcommand{\xY}{\mathbf{Y}}
\newcommand{\xV}{\mathbf{V}}
\newcommand{\xZ}{\mathbf{Z}}
\newcommand{\xv}{\mathbf{v}}
\newcommand{\barxV}{\mathbf{\bar{V}}}
\newcommand{\barxH}{\mathbf{\bar{H}}}
\newcommand{\barxX}{\mathbf{\bar{X}}}
\newcommand{\barxY}{\mathbf{\bar{Y}}}
\newcommand{\barxZ}{\mathbf{\bar{Z}}}
\newcommand{\barxv}{\mathbf{\bar{v}}}
\begin{document}

\title{Degrees of Freedom of the $K$ User $M \times N$ MIMO Interference Channel}

\author{\authorblockN{Tiangao Gou, Syed A. Jafar}\\
\authorblockA{Electrical Engineering and Computer Science\\
University of California Irvine, Irvine, California, 92697, USA\\
Email: \{tgou,syed\}@uci.edu}}\maketitle \IEEEpeerreviewmaketitle

\begin{abstract}
We provide innerbound and outerbound for the total number of degrees
of freedom of the $K$ user multiple input multiple output (MIMO)
Gaussian interference channel with $M$ antennas at each transmitter
and $N$ antennas at each receiver if the channel coefficients are
time-varying and drawn from a continuous distribution. The bounds
are tight when the ratio $\frac{\max(M,N)}{\min(M,N)}=R$ is equal to
an integer. For this case, we show that the total number of degrees
of freedom is equal to $\min(M,N)K$ if $K \leq R$ and
$\min(M,N)\frac{R}{R+1}K$ if $K
> R$. Achievability is based on interference alignment. We
also provide examples where using interference alignment combined
with zero forcing can achieve more degrees of freedom than merely
zero forcing for some MIMO interference channels with constant
channel coefficients.
\end{abstract}
\newpage
\section{introduction}
Interference management is an important problem in wireless system
design. Researchers have been exploring the capacity
characterization of the Gaussian interference channel from a
information theoretic perspective for more than thirty years.
Several innerbounds and outerbounds of the capacity region for the
two user Gaussian interference channel with single antenna nodes are
determined
\cite{Carleial:75IT,Sato:81IT,Han&Kobayashi:81IT,Costa:85IT,Sato:77IT,Carleial:83IT,Kramer:04IT,Etkin-etal:07IT_submission,Telatar&Tse:07ISIT,Shang-etal:06IT_submission}.
However, the capacity region of the Gaussian interference channel
remains an open problem in general. Interference channels with
multiple-antenna nodes are studied in
\cite{Vishwanath-Jafar:ITW,Shang-etal:MIMO,ParetoMISO}.

\subsection{Motivating Example}
In \cite{ParetoMISO}, the authors study the achievable rate region
of the multiple input single output (MISO) interference channel
obtained by treating interference as noise. They parameterize the
Pareto boundary of the MISO Gaussian interference channel for
arbitrary number of users and antennas at the transmitter as long as
the number of antennas is larger than the number of users. For 2
user case, they show that the optimal beamforming directions are a
linear combination of maximum ratio transmission vectors and the
zero forcing vectors. However, for the case when the number of
antennas is less than that of users, the optimal beamforming
direction is not known. Intuitively, this is because when the number
of antennas is less than that of users, it is not possible for each
user to choose beamforming vectors to ensure no interference is
created at all other users. The same problem is evident when we
study this channel from a degrees of freedom \footnote{If the sum
capacity can be expressed as $C_{\Sigma}(SNR)=\eta
\log(SNR)+o(\log(SNR))$ then we say that the channel has $\eta$
degrees of freedom.} perspective. For the 2 user MISO interference
channel with 2 transmit antennas and a single receive antenna, it is
easy to see 2 degrees of freedom can be achieved if each user
chooses zero forcing beamforming vector so that no interference is
created at the other user. This is also the maximum number of
degrees of freedom of this channel. However, for 3 user MISO
interference channel with two antennas at each transmitter, it is
not possible for each user to choose beamforming vectors so that no
interference is created at all other users. As a result, only 2
degrees of freedom can be achieved by zero forcing. Can we do better
than merely zero forcing? What is the total number of degrees of
freedom of the 3 user MISO interference channel with 2 antennas at
each transmitter? In general, what is the total number of degrees of
freedom of the $K$ user $M \times N$ MIMO interference channel?
These are the questions that we explore in this paper.

Before we answer the above questions, let us first review the
results on the degrees of freedom for the $K$ user single input
single output (SISO) Gaussian interference channel. If $K=1$, it is
well known the degrees of freedom for this point to point channel is
1. If $K=2$, it is shown that this channel has only 1 degrees of
freedom \cite{Nosratinia-Madsen}. In other words, each user can
achieve $\frac{1}{2}$ degrees of freedom simultaneously. For $K>2$,
it is surprising that every user is still able to achieve
$\frac{1}{2}$ degrees of freedom no matter how large $K$ is, if the
channel coefficients are time-varying or frequency selective and
drawn from a continuous distribution \cite{Cadambe_Jafar_int}. The
achievable scheme is based on interference alignment combined with
zero forcing.

For the MISO interference channel we find a similar characterization
of the degrees of freedom. For example, the degrees of freedom for
the 3 user MISO interference channel with 2 antennas at each
transmitter is only 2 which is the same as that for the 2 user case.
In other words, every user can achieve $\frac{2}{3}$ degrees of
freedom simultaneously. For $K>3$, every user is still able to
achieve $\frac{2}{3}$ degrees of freedom regardless of $K$ if the
channel coefficients are time-varying or frequency selective and
drawn from a continuous distribution. The achievable scheme is based
on interference alignment on the single input multiple output (SIMO)
interference channel for simplicity. If interference alignment is
achieved on the SIMO channel it can also be achieved on the MISO
channel, due to a reciprocity of alignment
\cite{Gomadam_Cadambe_Jafar_dist}. Interestingly, the interference
alignment scheme is different from all prior schemes. All prior
interference alignment schemes \cite{Cadambe_Jafar_int} (including
the ones for the $X$ channel \cite{Jafar_Shamai,Cadambe_Jafar_X})
explicitly achieve one-to-one alignment of signal vectors, i.e., to
minimize the dimension of the space spanned by interference signal
vectors, one signal vector from an interferer and one signal vector
from another interferer are aligned along the same dimension at the
desired receivers. For example, consider 3 user SISO interference
channel with 2 symbol extension or 3 user MIMO interference channel
where each node has 2 antennas. We need to choose beamforming
vectors $\mathbf{v}^{[2]}$ and $\mathbf{v}^{[3]}$ at Transmitter 2
and 3, respectively so that they cast overlapping shadow at Receiver
1, i.e.,
\begin{displaymath}
\mathbf{H}^{[12]}\mathbf{v}^{[2]}=\mathbf{H}^{[13]}\mathbf{v}^{[3]}
\end{displaymath}
where $\mathbf{H}^{[12]}$ and $\mathbf{H}^{[13]}$ are $2 \times 2$
channel matrices from Transmitter 2 and 3 to Receiver 1,
respectively. However, such an alignment is not feasible on the SIMO
channel. Notice that the solution to the condition mentioned above
exists only when the range of the two channel matrices has
intersection. The channel matrix for 2 symbol extension SIMO channel
with 2 antennas at each receiver is $4 \times 2$. The range of two
such channel matrices has null intersection with probability one if
the channel coefficients are drawn from a continuous distribution.
Thus, one-to-one interference alignment does not directly work for
SIMO channel. Instead, interference from one interferer can only be
aligned within the union of the spaces spanned by the interference
vectors from $R$ other interferers where $R$ is the number of
antennas at each receiver.

\subsection{Overview of Results}
In this paper we study the degrees of freedom of the $K$ user MIMO
Gaussian interference channel with $M$ antennas at each transmitter
and $N$ antennas at each receiver. We provide both the innerbound
(achievability) and outerbound (converse) of the total number of
degrees of freedom for this channel. We show that $\min(M,N)K$
degrees of freedom can be achieved if $K \leq R$ and
$\frac{R}{R+1}\min(M,N)K$ degrees of freedom can be achieved if
$K>R$ where $R=\lfloor\frac{\max(M,N)}{\min{(M,N)}}\rfloor$. The
total number of degrees of freedom is bounded above by $\min(M,N)K$
if $K \leq R$ and $\frac{\max(M,N)}{R+1}K$ if $K>R$. The bounds are
tight when the ratio $\frac{\max(M,N)}{\min(M,N)}=R$ is equal to an
integer which includes MISO and SIMO interference channel as special
cases. The result indicates when $K \leq R$ every user can achieve
$\min(M,N)$ degrees of freedom which is the same as what one can
achieve without interference. When $K>R$ every user can achieve a
fraction  $\frac{R}{R+1}$ of the degrees of freedom that one can
achieve in the absence of all interference. In other words, if $K
\leq R$, then there is no loss of degrees of freedom for each user
with interference. If $K > R$, every user only loses a fraction
$\frac{1}{R+1}$ of the degrees of freedom that can be achieved
without interference. In the second part of this paper we study the
achievable degrees of freedom based on interference alignment scheme
for the $R+2$ user MIMO interference channel with $M$ antennas at
each transmitter and $RM$, $R=2,3,\ldots$ antennas at each receiver
and constant channel coefficients, i.e. in the absence of time
variation. We show that for this channel
$RM+\lfloor\frac{RM}{R^2+2R-1}\rfloor$ degrees of freedom can be
achieved without symbol extension. When
$\lfloor\frac{RM}{R^2+2R-1}\rfloor<0$ and hence $M<R+2$,
$RM+\frac{1}{\lceil\frac{R+2}{M}\rceil}$ degrees of freedom per
orthogonal dimension can be achieved with finite symbol extension.
Since only $RM$ degrees of freedom can be achieved using zero
forcing, these results provide interesting examples where using
interference alignment scheme can achieve more degrees of freedom
than merely zero forcing.

\section{system model}
The $K$ user MIMO interference channel is comprised of $K$
transmitters and $K$ receivers. Each transmitter has $M$ antennas
and each receiver has $N$ antennas. The channel output at the
$k^{th}$ receiver over the $t^{th}$ time slot is characterized by
the following input-output relationship:
\begin{equation*}
\mathbf{Y}^{[k]}(t)=\xH^{[k1]}(t)\xX^{[1]}(t)+\xH^{[k2]}(t)\xX^{[2]}(t)+\cdots+\xH^{[kK]}(t)\xX^{[K]}(t)+\xZ^{[k]}(t)
\end{equation*}
where, $k\in \{1,2,\cdots,K\}$ is the user index, $t \in \mathbb{N}$
is the time slot index, $\mathbf{Y}^{[k]}(t)$ is the $N \times 1$
output signal vector of the $k^{th}$ receiver, $\xX^{[j]}(t)$ is the
$M \times 1$ input signal vector of the $j^{th}$ transmitter,
$\xH^{[kj]}(t)$ is the $N \times M$ channel matrix from transmitter
$j$ to receiver $k$ over the $t^{th}$ time slot and
$\mathbf{Z}^{[k]}(t)$ is $N\times 1$ additive white Gaussian noise
(AWGN) vector at the $k^{th}$ receiver. We assume all noise terms
are i.i.d zero mean complex Gaussian with unit variance. We assume
that all channel coefficient values are drawn i.i.d. from a
continuous distribution and the absolute value of all the channel
coefficients is bounded between a non-zero minimum value and a
finite maximum value. The channel coefficient values vary at every
channel use. Perfect knowledge of all channel coefficients is
available to all transmitters and receivers.

Transmitters $1, 2, \cdots, K$ have independent messages $W_1, W_2,
\cdots, W_K$ intended for receivers $1, 2, \cdots, K$, respectively.
The total power across all transmitters is assumed to be equal to
$\rho$. We indicate the size of the message set by $|W_i(\rho)|$.
For codewords spanning $t_0$ channel uses, the rates
$R_i(\rho)=\frac{\log|W_i(\rho)|}{t_0}$ are achievable if the
probability of error for all messages can be simultaneously made
arbitrarily small by choosing an appropriately large $t_0$. The
capacity region $\mathcal{C}(\rho)$ of the $K$ user MIMO
interference channel is the set of all achievable rate tuples
${\mathbf R}(\rho)=(R_1(\rho), R_2(\rho), \cdots, R_K(\rho))$.

We define the spatial degrees of freedom as:
\begin{equation}
\eta \triangleq \lim_{\rho \rightarrow \infty}
\frac{C_\Sigma(\rho)}{\log(\rho)}
\end{equation}
where $C_\Sigma(\rho)$ is the sum capacity at SNR $\rho$.

\section{Outerbound on the degrees of freedom for the $K$ user MIMO interference channel}
We provide an outerbound on the degrees of freedom for the $K$ user
 MIMO Gaussian interference channel in this section. Note that the converse holds for both time-varying and
 constant (non-zero)
channel coefficients, i.e., time variations are not required. We
present the result in the following theorem:
\begin{theorem}\label{thm:outerbound}
For the $K$ user MIMO Gaussian interference channel with $M$
antennas at each transmitter and $N$ antennas at each receiver, the
total number of degrees of freedom is bounded above by $K\min(M,N)$
if $K \leq R$ and $\frac{\max(M,N)}{R+1}K$ if $K>R$ where
$R=\lfloor\frac{\max(M,N)}{\min{(M,N)}}\rfloor$, i.e.
\begin{equation*}
\eta=d_1+\cdots+d_K \leq \min{(M,N)}K~1(K \leq
R)+\frac{\max(M,N)}{R+1}K~1(K>R)
\end{equation*}
where 1(.) is the indicator function and $d_i$ represents the
individual degrees of freedom achieved by user $i$.
\end{theorem}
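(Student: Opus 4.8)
The argument divides according to whether $K\le R$ or $K>R$, and the regime $K>R$ is the substantive one. When $K\le R$ the bound is immediate: giving receiver $i$ the transmitted signals of every other transmitter as genie side information can only help, and after it subtracts them it sees an isolated point-to-point $M\times N$ MIMO link, so $d_i\le\min(M,N)$ and hence $\eta\le\min(M,N)K$. For $K>R$ the plan is to reduce everything to one \emph{core claim}: in the $K$-user $M\times N$ channel, any $R+1$ of the users together achieve at most $\max(M,N)$ degrees of freedom. Granting this, fix any $\mathcal S\subseteq\{1,\dots,K\}$ with $|\mathcal S|=R+1$; handing every receiver the signals $\{\xX^{[l]}(t):l\notin\mathcal S\}$ as a genie shows $\sum_{i\in\mathcal S}d_i$ is at most the sum DoF of the $(R+1)$-user sub-channel on $\mathcal S$, hence at most $\max(M,N)$. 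Summing these $\binom{K}{R+1}$ inequalities, and using that each user appears in $\binom{K-1}{R}$ of the subsets, gives $\binom{K-1}{R}\eta\le\binom{K}{R+1}\max(M,N)$, i.e.\ $\eta\le\frac{K}{R+1}\max(M,N)$ since $\binom{K}{R+1}/\binom{K-1}{R}=K/(R+1)$. This averaging step is routine.

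The core claim is where the real work lies, and I would prove it by a genie-aided bound. By reciprocity of DoF under transposing all channel matrices (or by a symmetric argument) assume $N\ge M$, so that $\max(M,N)=N$ and $R=\lfloor N/M\rfloor$ gives $RM\le N<(R+1)M$; relabel the $R+1$ users as $1,\dots,R+1$. For each $j\in\{2,\dots,R+1\}$ I would supply receiver $j$ with all messages among $\{W_1,\dots,W_{R+1}\}$ except $W_j$; since the messages are independent this yields $n_0R_j\le I(W_j;\xY^{[j]})\le I(W_j;\xY^{[j]},\{W_l\}_{l\ne j})=h(\xH^{[jj]}\xX^{[j]}+\xZ^{[j]})-h(\xZ^{[j]})+o(n_0\log\rho)$, all entropies taken over a block of $n_0$ channel uses and the genie having stripped every interfering signal. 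For user $1$, without any genie, $n_0R_1\le h(\xY^{[1]})-h(\xY^{[1]}\mid W_1)\le N n_0\log\rho-h\big(\sum_{l=2}^{R+1}\xH^{[1l]}\xX^{[l]}+\xZ^{[1]}\big)+o(n_0\log\rho)$. Adding these bounds, the claim $\eta\le N$ reduces to the single entropy inequality
\begin{equation*}
\sum_{j=2}^{R+1}h\big(\xH^{[jj]}\xX^{[j]}+\xZ^{[j]}\big)\ \le\ h\Big(\textstyle\sum_{l=2}^{R+1}\xH^{[1l]}\xX^{[l]}+\xZ^{[1]}\Big)+o(n_0\log\rho).
\end{equation*}

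The point is that $RM\le N$ forces the aggregate matrix $[\,\xH^{[12]}\ \cdots\ \xH^{[1,R+1]}\,]$, which is $N\times RM$, to have full column rank almost surely (the coefficients being drawn from a continuous distribution), so its entropy exceeds the joint entropy $h(\xX^{[2]},\dots,\xX^{[R+1]})$ by only an $O(n_0)$ term; meanwhile each $\xH^{[jj]}$ has full column rank $M$, so every term on the left exceeds $h(\xX^{[j]})$ by at most an $O(n_0)$ term, and by independence of the messages $\sum_{j}h(\xX^{[j]})=h(\xX^{[2]},\dots,\xX^{[R+1]})$. I expect the careful handling of these $O(n_0)$ slacks — which come from the noise entropies and from the channel gains being bounded away from $0$ and $\infty$ — to be the main obstacle in making the core claim rigorous; the averaging step and the $K\le R$ case involve no real difficulty. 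Finally, I note that time variation of the channel is never used anywhere in this argument, so the converse holds unchanged for constant nonzero channel coefficients.
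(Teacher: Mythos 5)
Your overall architecture coincides with the paper's: the single-user bound disposes of $K\le R$, the heart of the matter is the claim that any $R+1$ users have sum degrees of freedom at most $\max(M,N)$, and the final averaging over all $\binom{K}{R+1}$ subsets (each user appearing in $\binom{K-1}{R}$ of them) is exactly the paper's last step. Where you diverge is in how the $(R+1)$-user claim is proved. The paper lets $R$ transmitters and their $R$ receivers fully cooperate, turning the subnetwork into a two-user MIMO interference channel with antenna counts $(RM,RN)$ and $(M,N)$, and then quotes the known formula $\min\{M_1+M_2,\,N_1+N_2,\,\max(M_1,N_2),\,\max(M_2,N_1)\}$ of Jafar--Fakhereddin, which evaluates to $\max(M,N)$. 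You instead try to re-derive the bound from scratch with a genie-aided entropy calculation; that is a legitimate alternative in principle (it is essentially the technique behind the cited two-user result), but as written it has a real gap.

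The gap is the case $M>N$. Your entropy inequality hinges on the aggregate cross-channel matrix $[\xH^{[12]}\ \cdots\ \xH^{[1(R+1)]}]$, of size $N\times RM$, having full column rank, which requires $RM\le N$ and hence $N\ge M$. When $M>N$ this matrix is wide, receiver 1 cannot resolve the $R$ interfering inputs, and the same chain of inequalities only yields $(R+1)N$, which strictly exceeds $M$; so there is no ``symmetric argument.'' Nor can you invoke ``reciprocity'' here: equality of degrees of freedom under swapping all transmitters and receivers is itself a theorem established by matching inner and outer bounds (the paper uses reciprocity only on the achievability side), not an a priori symmetry available inside a converse proof. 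The $M>N$ case genuinely needs a different reduction --- e.g., the paper's cooperation argument, which covers both orderings uniformly. A secondary, fixable issue: your intermediate comparisons to $h(\xX^{[j]})$ and $h(\xX^{[2]},\dots,\xX^{[R+1]})$ are ill-posed for input distributions without densities (these differential entropies can be $-\infty$, making ``exceeds $h(\xX^{[j]})$ by at most $O(n_0)$'' false as stated); the standard repair compares noisy versions of the inputs throughout, using the boundedness of the channel coefficients --- exactly the $O(n_0)$ bookkeeping you flag as the main obstacle.
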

\begin{proof}\\
1) $K \leq R$: It is well known that the degrees of freedom of a
single user MIMO Gaussian channel with $M$ transmit antennas and $N$
receive anteanns is equal to $\min(M,N)$. Thus, for the $K$ user
MIMO Gaussian interference channel with the same antenna deployment,
the degrees of freedom cannot be more than $K\min(M,N)$, i.e $\eta
\leq
K\min(M,N)$.\\
2) $K>R$: Consider the $R+1$ user MIMO interference channel with
$M,N$ antennas at the transmitter and receiver respectively. If we
allow full cooperation among $R$ transmitters and full cooperation
among their corresponding receivers, then it is equivalent to the
two user MIMO interference channel with $RM$, $M$ (respectively)
antennas at transmitters and $RN$, $N$ antennas at their
corresponding receivers. In \cite{Jafar_dof_int}, it is shown that
the degrees of freedom for a two user MIMO Gaussian interference
channel with $M_1$, $M_2$ antennas at transmitter $1$, $2$ and
$N_1$, $N_2$ antennas at their corresponding receivers is
min\{$M_1+M_2$, $N_1+N_2$, max($M_1$,$N_2$), max($M_2$,$N_1$)\}.
From this result, the degrees of freedom for the two user MIMO
interference channel with $RM$, $M$ antennas at the transmitters and
$RN$, $N$ at their corresponding receivers is $\max(M,N)$. Since
allowing transmitters and receivers to cooperate does not hurt the
capacity, the degrees of freedom of the original $R+1$ user
interference channel is no more than $\max(M,N)$. For $K>R+1$ user
case, picking any $R+1$ users among $K$ users gives an outerbound:
\begin{equation}
d_{i_1}+d_{i_2}+\cdots+d_{i_{R+1}} \leq \max(M,N) \quad \forall
i_1,\cdots,i_{R+1} \in \{1,2,\cdots,K\}, \quad i_1 \ne i_2 \ne
\cdots \ne i_{R+1}
\end{equation}
Adding up all such inequalities, we get the outerbound of the $K$
user MIMO interference channel:
\begin{equation}
d_1+d_2+\cdots+d_K \leq \frac{\max(M,N)}{R+1}K
\end{equation}
\end{proof}

\section{Innerbound on the degrees of freedom for the $K$ user MIMO interference channel}
To derive the innerbound on the degrees of freedom for the $K$ user
MIMO Gaussian interference channel, we first obtain the achievable
degrees of freedom for the $K$ user SIMO interference channel with
$R$ antennas at each receiver. The innerbound on the degrees of
freedom of the $K$ user MIMO interference channel follows directly
from the results of the SIMO interference channel. The corresponding
input-output relationship of the $K$ user SIMO interference channel
is:
\begin{equation*}
\mathbf{Y}^{[k]}(t)=\xh^{[k1]}(t)x^{[1]}(t)+\xh^{[k2]}(t)x^{[2]}(t)+\cdots+\xh^{[kK]}(t)x^{[K]}(t)+\xZ^{[k]}(t)
\end{equation*}
where $\mathbf{Y}^{[k]}(t)$,  $x^{[j]}(t)$, $\xh^{[kj]}(t)$,
$\xZ^{[k]}(t)$ represent the channel output at receiver $k$, the
channel input from transmitter $j$, the channel vector from
transmitter $j$ to receiver $k$ and the AWGN vector at receiver $k$
over the $t^{th}$ time slot respectively.

We start with the problem mentioned in the introduction. For the 3
user SIMO Gaussian interference channel with 2 receive antennas, 2
degrees of freedom can be achieved using zero forcing. From the
converse result in the last section, we cannot achieve more than 2
degrees of freedom on this channel. Therefore, the maximum number of
degrees of freedom for this channel is 2. For the 4 user case, the
converse result indicates that this channel cannot achieve more than
$\frac{8}{3}$ degrees of freedom. Can we achieve this outerbound?
Interestingly, using interference alignment scheme based on
beamforming over multiple symbol extensions of the original channel,
we are able to approach arbitrarily close to the outerbound.
Consider the  $\mu_n=3(n+1)^8$ symbol extension of the channel for
any arbitrary $n \in \mathbb{N}$. Then, we effectively have a
$2\mu_n \times \mu_n$ channel with a block diagonal structure. In
order for each user to get exactly $\frac{2}{3}$ degrees of freedom
per channel use and hence $\frac{2}{3}\mu_n=2(n+1)^8$ degrees of
freedom on the $\mu_n$ symbol extension channel, each receiver with
a total of $2\mu_n$ dimensional signal space should partition its
signal space into two disjoint subspaces, one of which has
$\frac{2}{3}\mu_n$ dimension for the desired signals and the other
has $\frac{4}{3}\mu_n$ dimension for the interference signals. While
such an alignment would exactly achieve the outerbound, it appears
to be infeasible in general. But if we allow user 4 to achieve only
$(\frac{2}{3}-\epsilon_n)\mu_n=2n^8$ degrees of freedom over the
$\mu_n$ extension channel where
$\epsilon_n=\frac{2(n+1)^8-2n^8}{3(n+1)^8}=\frac{2}{3}[1-\frac{1}{(1+\frac{1}{n})^8)}]$,
then it is possible for user 1, 2, 3 to achieve exactly
$\frac{2}{3}\mu_n$ degrees of freedom simultaneously for a total of
$(\frac{8}{3}-\epsilon_n)\mu_n$ degrees of freedom over the $\mu_n$
symbol extension channel. Hence,
$\frac{8}{3}-\frac{2}{3}[1-\frac{1}{(1+\frac{1}{n})^8)}]$ degrees of
freedom per channel use can be achieved. As $n \to \infty$,
$\frac{2}{3}[1-\frac{1}{(1+\frac{1}{n})^8)}] \to 0$. Therefore, we
can achieve arbitrarily close to the outerbound $\frac{8}{3}$. Next
we present a detailed description of the interference-alignment
scheme for the 4 user SIMO channel with 2 antennas at each receiver.

In the extended channel, Transmitter $j, \forall j=1,2,3$ sends
message $W_j$ to Receiver $j$ in the form of $\frac{2}{3}\mu_n$
independently encoded steams $x^{[j]}_m(t),
m=1,2,\ldots,\frac{2}{3}\mu_n$ along the same set of beamforming
vectors
$\barxv^{[1]}_1(t),\ldots,\barxv^{[1]}_{\frac{2}{3}\mu_n}(t)$, each
of dimension $\mu_n \times 1$, so that we have
\begin{figure}[t]
\centering
\includegraphics[width=6.4in]{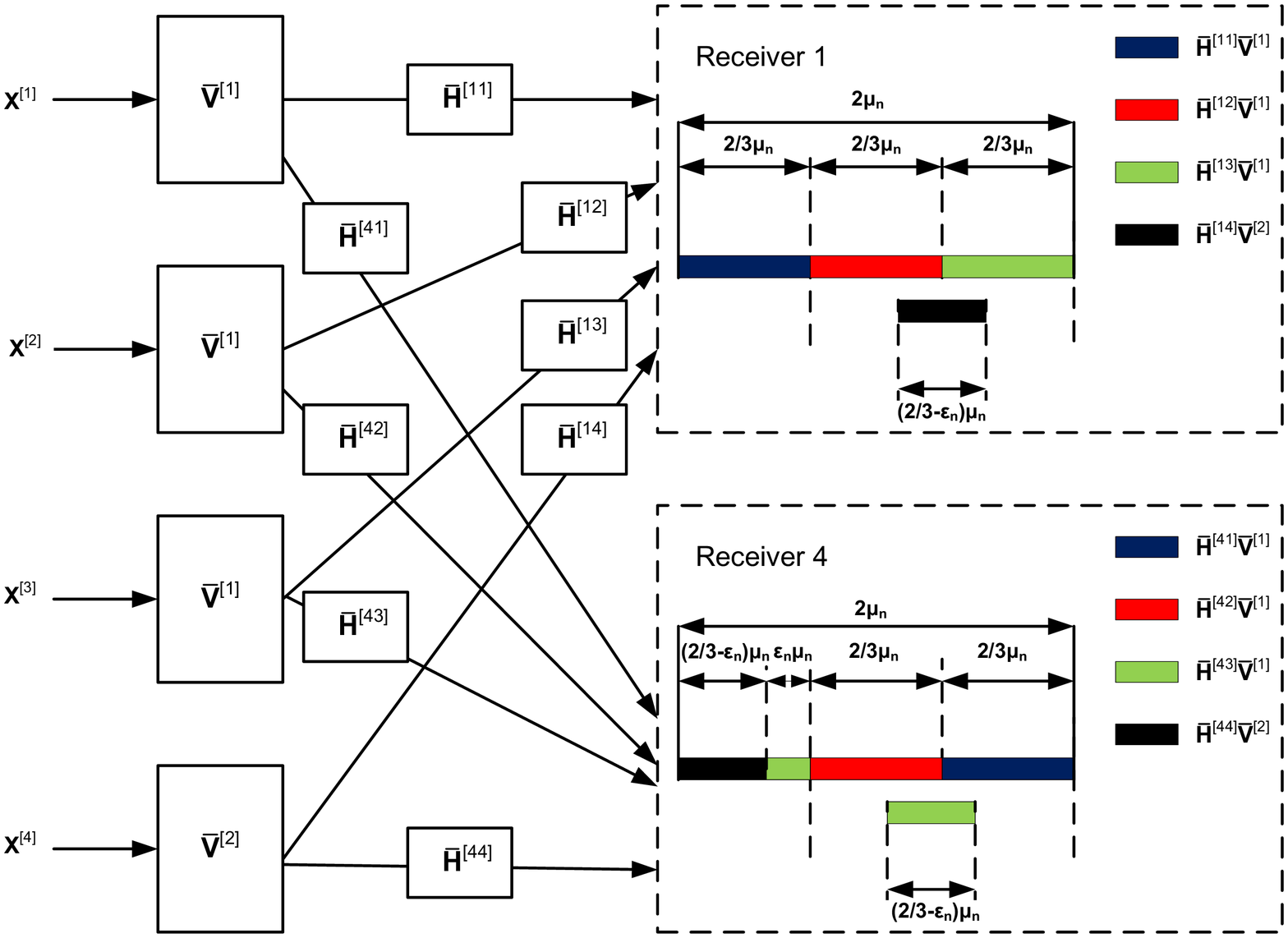}
\caption{Interference alignment on the 4 user interference channel}
\label{fig1}
\end{figure}
\begin{equation*}
\barxX^{[j]}(t) = \displaystyle\sum_{m=1}^{\frac{2}{3}\mu_n}
x^{[j]}_m(t) \barxv_m^{[1]}(t) = \barxV^{[1]}(t) \xX^{[j]}(t), ~~~ j
= 1,2,3
\end{equation*}
where
$\barxV^{[1]}(t)=[\barxv^{[1]}_1(t),\cdots,\barxv^{[1]}_{\frac{2}{3}\mu_n}(t)]$
is a $\mu_n \times \frac{2}{3}\mu_n$ matrix and $\xX^{[j]}(t)$ is a
$\frac{2}{3}\mu_n \times 1$ column vector. Transmitter 4 sends
message $W_4$ to Receiver 4 in the form of
$(\frac{2}{3}-\epsilon_n)\mu_n$ independently encoded streams
$x^{[4]}_m(t), m=1,2,\ldots,(\frac{2}{3}-\epsilon_n)\mu_n$ along the
beamforming vectors
$\barxv^{[2]}_1(t),\ldots,\barxv^{[2]}_{(\frac{2}{3}-\epsilon_n)\mu_n}(t)$
so that
\begin{equation*}
\barxX^{[4]}(t) =
\displaystyle\sum_{m=1}^{(\frac{2}{3}-\epsilon_n)\mu_n} x^{[4]}_m(t)
\barxv_m^{[2]}(t) = \barxV^{[2]}(t) \xX^{[4]}(t)
\end{equation*}
where
$\barxV^{[2]}(t)=[\barxv^{[2]}_1(t),\cdots,\barxv^{[2]}_{(\frac{2}{3}-\epsilon_n)\mu_n}(t)]$
is a $\mu_n \times (\frac{2}{3}-\epsilon_n)\mu_n$ matrix and
$\xX^{[4]}(t)$ is a $(\frac{2}{3}-\epsilon_n)\mu_n \times 1$ column
vector. Therefore, the received signal at Receiver $k$ is
\begin{equation*}
\barxY^{[k]}(t) = \displaystyle\sum_{j=1}^{3}\barxH^{[kj]}(t)
\barxV^{[1]}(t)\xX^{[j]}(t) +
\barxH^{[k4]}(t)\barxV^{[2]}(t)\xX^{[4]}(t) +
\bar{\mathbf{Z}}^{[k]}(t)
\end{equation*}
where $\barxH^{[kj]}(t)$ is the $2\mu_n \times \mu_n$ matrix
representing the $\mu_n$ extension of the original channel matrix,
i.e.
\begin{eqnarray*}
\bar{\mathbf{H}}^{[kj]}(t)=  \left[ \begin{array}{cccc}  \mathbf{h}^{[kj]}(\mu_n(t-1)+1) & \mathbf{0} & \ldots & \mathbf{0}\\
     \mathbf{0} & \mathbf{h}^{[kj]}(\mu_n(t-1)+2) & \ldots & \mathbf{0}\\
    \vdots & \cdots & \ddots & \vdots\\
     \mathbf{0} & \mathbf{0}& \cdots  & \mathbf{h}^{[kj]}(\mu_nt) \\
    \end{array}\right]
\end{eqnarray*}
where $\mathbf{0}$ is a $2 \times 1$ vector with zero entries.
Similarly, $\barxY$ and $\bar{\mathbf{Z}}$ represent the $\mu_n$
symbol extension of the $\xY$ and $\xZ$ respectively. The
interference alignment scheme is shown in Fig. \ref{fig1}. At
Receiver 1, the interference from Transmitter 2 and Transmitter 3
cannot be aligned with each other because the subspaces spanned by
the columns of $\barxH^{[12]}$ and $\barxH^{[13]}$ have null
intersection with probability one. Thus, the interference vectors
from Transmitter 2, i.e. columns of $\barxH^{[12]}\barxV^{[1]}$ and
interference vectors from Transmitter 3, i.e. columns of
$\barxH^{[13]}\barxV^{[1]}$ together span a $\frac{4}{3}\mu_n$
dimensional subspace in the $2\mu_n$ dimensional signal space at
Receiver 1. In order for Receiver 1 to get a $\frac{2}{3}\mu_n$
dimensional interference-free signal space, we need to align the
space spanned by the interference vectors from Transmitter 4, i.e.
the range of $\barxH^{[14]}\barxV^{[2]}$ within the space spanned by
the interference vectors from Transmitter 2 and 3. Note that we
cannot align the interference from Transmitter 4 within the space
spanned by the interference vectors from Transmitter 2 only or
Transmitter 3 only. Because the subspaces spanned by the columns of
$\barxH^{[14]}$ and $\barxH^{[12]}$ or the subspaces spanned by the
columns of $\barxH^{[14]}$ and $\barxH^{[13]}$ have null
intersection with probability one. Mathematically, we have
\begin{equation}\label{reqrx1}
\text{span}(\barxH^{[14]}\barxV^{[2]}) \subset \text{span}(\left[
\barxH^{[12]}\barxV^{[1]}~ \barxH^{[13]}\barxV^{[1]} \right])
\end{equation}
where $\text{span}(\mathbf{A})$ means the space spanned by the
columns of matrix $\mathbf{A}$. This condition can be expressed
equivalently as
\begin{equation*}
\text{span}(\barxH^{[14]}\barxV^{[2]}) \subset \text{span} (\left[
\barxH^{[12]}~ \barxH^{[13]}\right]
\left[ \begin{array}{cc} \barxV^{[1]}& \mathbf{0}\\
\mathbf{0}&\barxV^{[1]} \end{array} \right])
\end{equation*}
where $\mathbf{0}$ denotes a $\mu_n \times \frac{2}{3}\mu_n$ matrix
with zero entries. Note that $[\barxH^{[12]}~ \barxH^{[13]}]$ is a
$2\mu_n \times 2\mu_n$ matrix with full rank almost surely.
Therefore, the last equation is equivalent to
\begin{eqnarray}\label{reqrx1_2}
\text{span}(\underbrace{[\barxH^{[12]}~\barxH^{[13]}]^{-1}\barxH^{[14]}}_{\mathbf{T}^{[1]}}
\barxV^{[2]}) \subset \text{span}(\left[ \begin{array}{cc} \barxV^{[1]}& \mathbf{0}\\
\mathbf{0}&\barxV^{[1]} \end{array} \right])
\end{eqnarray}
where $\mathbf{T}^{[1]}$ is a $2\mu_n \times \mu_n$ matrix which can
be written in a block matrix form:
\begin{equation*}
\mathbf{T}^{[1]}=\left[\begin{array}{c}\mathbf{T}^{[1]}_{1}\\
\mathbf{T}^{[1]}_{2}\end{array}\right]
\end{equation*}
where $\mathbf{T}^{[1]}_{1}$ and $\mathbf{T}^{[1]}_{2}$ are $\mu_n
\times \mu_n$ matrices. Therefore, \eqref{reqrx1_2} can be expressed
alternatively as
\begin{eqnarray}
\text{span}( \left[ \begin{array}{c} \mathbf{T}^{[1]}_1 \barxV^{[2]}\\
\mathbf{T}^{[1]}_2 \barxV^{[2]}\\ \end{array} \right] )\subset
\text{span}(\left[ \begin{array}{cc} \barxV^{[1]}& \mathbf{0}\\
\mathbf{0}& \barxV^{[1]} \end{array} \right])
\end{eqnarray}
This condition can be satisfied if
\begin{eqnarray}\label{cond1}
\left\{\begin{array}{ccc}
\mathbf{T}^{[1]}_1 \barxV^{[2]} & \prec & \barxV^{[1]} \\
\mathbf{T}^{[1]}_2 \barxV^{[2]} & \prec & \barxV^{[1]}  \\
\end{array}\right .
\end{eqnarray}
where $\mathbf{P}  \prec  \mathbf{Q}$  means that the set of column
vectors of matrix $\mathbf{P}$ is a subset of the set of column
vectors of matrix $\mathbf{Q}$.

Similarly, at Receiver 2, the interference vectors from Transmitter
4 are aligned within the space spanned by the interference vectors
from Transmitter 1 and 3, i.e.,
\begin{equation}
\text{span}(\barxH^{[24]}\barxV^{[2]}) \subset \text{span}(\left[
\barxH^{[21]}\barxV^{[1]}~ \barxH^{[23]}\barxV^{[1]} \right])
\end{equation}
This condition can be satisfied if
\begin{eqnarray}\label{cond2}
\left\{\begin{array}{ccc}
\mathbf{T}^{[2]}_1 \barxV^{[2]} & \prec & \barxV^{[1]} \\
\mathbf{T}^{[2]}_2 \barxV^{[2]} & \prec & \barxV^{[1]}  \\
\end{array}\right .
\end{eqnarray}
where
\begin{equation*}
\mathbf{T}^{[2]}=\left[\begin{array}{c}\mathbf{T}^{[2]}_{1}\\
\mathbf{T}^{[2]}_{2}\end{array}\right]=[\barxH^{[21]}~\barxH^{[23]}]^{-1}\barxH^{[24]}
\end{equation*}
At Receiver 3, the interference vectors from Transmitter 4 are
aligned within the space spanned by the interference vectors from
Transmitter 1 and 2, i.e.
\begin{equation}
\text{span}(\barxH^{[34]}\barxV^{[2]}) \subset \text{span}(\left[
\barxH^{[31]}\barxV^{[1]}~ \barxH^{[32]}\barxV^{[1]} \right])
\end{equation}
This condition can be satisfied if
\begin{eqnarray}\label{cond3}
\left\{\begin{array}{ccc}
\mathbf{T}^{[3]}_1 \barxV^{[2]} & \prec & \barxV^{[1]} \\
\mathbf{T}^{[3]}_2 \barxV^{[2]} & \prec & \barxV^{[1]}  \\
\end{array}\right .
\end{eqnarray}
where
\begin{equation*}
\mathbf{T}^{[3]}=\left[\begin{array}{c}\mathbf{T}^{[3]}_{1}\\
\mathbf{T}^{[3]}_{2}\end{array}\right]=[\barxH^{[31]}~\barxH^{[32]}]^{-1}\barxH^{[34]}
\end{equation*}
Now, let us consider Receiver 4. As shown in Fig. \ref{fig1}, to get
a $(\frac{2}{3}-\epsilon_n)\mu_n$ interference free dimensional
signal space, the dimension of the space spanned by the interference
vectors has to be less than or equal to
$2\mu_n-(\frac{2}{3}-\epsilon_n)\mu_n$. To achieve this, we align
the space spanned by $(\frac{2}{3}-\epsilon_n)\mu_n$ vectors of the
interference vectors from Transmitter 3 within the space spanned by
the interference from Transmitter 1 and 2. Since $\barxV^{[1]}$ is a
$\mu_n \times \frac{2}{3}\mu_n$ matrix, we can write it as
$\barxV^{[1]}=[\barxV^{[1]}_u~ \barxV^{[1]}_{\epsilon_n}]$ where
$\barxV^{[1]}_u$ and $\barxV^{[1]}_{\epsilon_n}$ are $\mu_n \times
(\frac{2}{3}-\epsilon_n)\mu_n$ and $\mu_n \times \epsilon_n\mu_n$
matrices, respectively. We assume the space spanned by the columns
of $\barxH^{[43]}\barxV^{[1]}_u$ is aligned within the space spanned
by the interference from Transmitter 1 and 2, i.e.,
\begin{equation}\label{alignrx4}
\text{span}(\barxH^{[43]}\barxV^{[1]}_u) \subset \text{span} (\left[
\barxH^{[41]}\barxV^{[1]}~ \barxH^{[42]}\barxV^{[1]} \right])
\end{equation}
From equation \eqref{cond1}, we have
\begin{equation*}
\mathbf{T}^{[1]}_1 \barxV^{[2]} \prec \barxV^{[1]}
\end{equation*}
This implies that $(\frac{2}{3}-\epsilon_n)\mu_n$ columns of
$\barxV^{[1]}$ are equal to the columns of $\mathbf{T}^{[1]}_1
\barxV^{[2]}$. Without loss of generality, we assume that
$\barxV^{[1]}_u=\mathbf{T}^{[1]}_1 \barxV^{[2]}$. Thus,
\eqref{alignrx4} can be written as
\begin{eqnarray*}
\text{span}(\barxH^{[43]}\barxV^{[1]}_u)=\text{span}(\barxH^{[43]}\mathbf{T}^{[1]}_1
\barxV^{[2]}) \subset \text{span} (\left[ \barxH^{[41]}\barxV^{[1]}~
\barxH^{[42]}\barxV^{[1]} \right])\\
\Rightarrow \text{span}(\barxH^{[43]}\mathbf{T}^{[1]}_1
\barxV^{[2]}) \subset \text{span} (\left[ \barxH^{[41]}~
\barxH^{[42]}\right]
\left[ \begin{array}{cc} \barxV^{[1]}& \mathbf{0}\\
\mathbf{0}& \barxV^{[1]} \end{array} \right])\\
\Rightarrow \text{span}(\underbrace{\left[\barxH^{[41]}~
\barxH^{[42]}\right]^{-1} \barxH^{[43]}\mathbf{T}^{[1]}_1
}_{\mathbf{T}^{[4]}}\barxV^{[2]}) \subset \text{span}
(\left[ \begin{array}{cc} \barxV^{[1]}& \mathbf{0}\\
\mathbf{0}& \barxV^{[1]} \end{array} \right])
\end{eqnarray*}
Note that $\mathbf{T}^{[4]}$ is a $2\mu_n \times \mu_n$ matrix and
can be written in a block matrix form:
\begin{equation*}
\mathbf{T}^{[4]}=\left[
\begin{array}{c}\mathbf{T}^{[4]}_1\\ \mathbf{T}^{[4]}_2 \end{array}\right] \label{T_block}
\end{equation*}
where each block $\mathbf{T}^{[4]}_i$ is a $\mu_n \times \mu_n$
matrix. Then, the above equation can be expressed as
\begin{equation*}
\text{span}( \left[ \begin{array}{c} \mathbf{T}^{[4]}_1 \barxV^{[2]}\\
\mathbf{T}^{[4]}_2 \barxV^{[2]}\\ \end{array} \right] )\subset
\text{span}(\left[ \begin{array}{cc} \barxV^{[1]}& \mathbf{0}\\
\mathbf{0}&\barxV^{[1]} \end{array} \right])
\end{equation*}
The above condition can be satisfied if
\begin{eqnarray}\label{cond4}
\left\{\begin{array}{ccc}
\mathbf{T}^{[4]}_1 \barxV^{[2]} & \prec & \barxV^{[1]} \\
\mathbf{T}^{[4]}_2 \barxV^{[2]} & \prec & \barxV^{[1]}  \\
\end{array}\right .
\end{eqnarray}
Therefore, we need to design $\barxV^{[1]}$ and $\barxV^{[2]}$ to
satisfy conditions \eqref{cond1}, \eqref{cond2}, \eqref{cond3},
\eqref{cond4}. Let $\mathbf{w}$ be a $3(n+1)^8 \times 1$ column
vector $\mathbf{w} = [1 \  1 \ \ldots \ 1]^T$. We need to choose
$2(n+1)^8$ column vectors for $\barxV^{[1]}$ and $2n^8$ column
vectors for $\barxV^{[2]}$. The sets of column vectors of
$\barxV^{[1]}$ and $\barxV^{[2]}$ are chosen to be equal to the sets
$\bar{V}^{[1]}$ and $\bar{V}^{[2]}$ where
\begin{equation*}
\begin{aligned} \bar{V}^{[1]} = &\{ \big(\prod_{i=1,2 j=1,\ldots,4} (\mathbf{T}_i^{[j]})^{\alpha_i^{[j]}}\big)\mathbf{w}:
\alpha_i^{[j]} \in \{1, \ldots, n+1\} \} \quad \cup
&\{\big(\prod_{i=1,2 j=1,\ldots,4}
(\mathbf{T}_i^{[j]})^{\beta_i^{[j]}}\big)\mathbf{w}: \beta_i^{[j]}
\in \{n+2, \ldots, 2n+2\} \}
\end{aligned}
\end{equation*}
\begin{equation*}
\begin{aligned} \bar{V}^{[2]} = &\{ \big(\prod_{i=1,2 j=1,\ldots,4} (\mathbf{T}_i^{[j]})^{\alpha_i^{[j]}}\big)\mathbf{w}:
\alpha_i^{[j]} \in \{1, \ldots, n\} \} \quad \cup
&\{\big(\prod_{i=1,2 j=1,\ldots,4}
(\mathbf{T}_i^{[j]})^{\beta_i^{[j]}}\big)\mathbf{w}: \beta_i^{[j]}
\in \{n+2, \ldots, 2n+1\} \}
\end{aligned}
\end{equation*}
For example, when $n=1$, the set $\bar{V}^{[2]}$ consists of two
elements, i.e., \\$\bar{V}^{[2]}= \{(\prod_{i=1,2 j=1,\ldots,4}
\mathbf{T}_i^{[j]})\mathbf{w} \quad (\prod_{i=1,2 j=1,\ldots,4}
(\mathbf{T}_i^{[j]})^3)\mathbf{w}\}$. The set $\bar{V}^{[1]}$
consists of $2(1+1)^8= 2^9$ column vectors in the form
$\{(\prod_{i=1,2 j=1,\ldots,4}
(\mathbf{T}_i^{[j]})^{\alpha_i^{[j]}})\mathbf{w} \quad (\prod_{i=1,2
j=1,\ldots,4} (\mathbf{T}_i^{[j]})^{\beta_i^{[j]}})\mathbf{w}\}$
where $\alpha_i^{[j]}$ takes values $1,2$; $\beta_i^{[j]}$ takes
values $3, 4$. Note that the above construction requires the
commutative property of multiplication of matrices
$\mathbf{T}^{[j]}_i$. Therefore, it requires $\mathbf{T}^{[j]}_i$ to
be diagonal matrices. We provide the proof to show this is true in
Appendix \ref{apdx:simo}. In order for each user to decode its
desired message by zero forcing the interference, it is required
that the desired signal vectors are linearly independent of the
interference vectors. We also show this is true in Appendix
\ref{apdx:simo}.

{\it Remark:} Note that for the $K$ user Gaussian interference
channel with  single antenna nodes\cite{Cadambe_Jafar_int} and $M
\times N$ user $X$ channel \cite{Cadambe_Jafar_X}, we need to
construct two precoding matrices $\xV$ and $\xV'$ to satisfy several
such conditions $\xV \prec \mathbf{T}_i \xV'$. Here, we use the same
precoding matrix $\barxV^{[1]}$ for Transmitter 1, 2, 3 so that we
need to design two precoding matrices $\barxV^{[1]}$ and
$\barxV^{[2]}$ to satisfy similar conditions $\barxV^{[2]} \prec
\mathbf{T}_i \barxV^{[1]}$. Therefore, we use the same method in
\cite{Cadambe_Jafar_int} and \cite{Cadambe_Jafar_X} to design
$\barxV^{[1]}$ and $\barxV^{[2]}$ here.

We present the general result for the achievable degrees of freedom
of the SIMO Gaussian interference channel in the following theorem.
\begin{theorem}\label{thm:simo}
For the $K>R+1$ user SIMO Gaussian interference channel with a
single antenna at each transmitter and $R$ antennas at each
receiver, a total of $\frac{R}{R+1}K$ degrees of freedom per
orthogonal time dimension can be achieved.
\end{theorem}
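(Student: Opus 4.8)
The plan is to generalize the explicit interference-alignment construction that was carried out above for the $K=4$, $R=2$ SIMO channel. First I would pass to the $\mu_n$-symbol extended channel, in which every channel matrix $\barxH^{[kj]}$ is $R\mu_n\times\mu_n$ with the block-diagonal structure exhibited above, and set $\mu_n=(R+1)(n+1)^{\Gamma}$ for an integer $\Gamma$ (depending only on $K$ and $R$) to be pinned down below. Transmitters $1,\dots,R+1$ all beamform along a single $\mu_n\times d^{[1]}$ matrix $\barxV^{[1]}$ with $d^{[1]}=\tfrac{R}{R+1}\mu_n=R(n+1)^{\Gamma}$ columns, while transmitters $R+2,\dots,K$ all beamform along a single $\mu_n\times d^{[2]}$ matrix $\barxV^{[2]}$ with $d^{[2]}=Rn^{\Gamma}$ columns. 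Since $d^{[1]}/\mu_n=\tfrac{R}{R+1}$ exactly and $d^{[2]}/\mu_n\to\tfrac{R}{R+1}$, the total number of streams $(R+1)d^{[1]}+(K-R-1)d^{[2]}$, divided by $\mu_n$, tends to $\tfrac{R}{R+1}K$ as $n\to\infty$, which is the claimed number.

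Next I would write down the alignment requirements receiver by receiver. At a receiver $k\le R+1$, whose desired transmitter uses $\barxV^{[1]}$, the interference from the $R$ other $\barxV^{[1]}$-transmitters generically occupies an $\tfrac{R^2}{R+1}\mu_n$-dimensional subspace, which has exactly the dimension of the complement of the desired $\tfrac{R}{R+1}\mu_n$-dimensional subspace in the $R\mu_n$-dimensional receive space; so I would require, for each $j\in\{R+2,\dots,K\}$, that $\text{span}(\barxH^{[kj]}\barxV^{[2]})$ lie inside the span of the $R$ matrices $\barxH^{[k\ell]}\barxV^{[1]}$, $\ell\in\{1,\dots,R+1\}\setminus\{k\}$. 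Exactly as in the derivation of \eqref{cond1}, inverting the full-rank $R\mu_n\times R\mu_n$ matrix built from these $R$ channel matrices and splitting the resulting $R\mu_n\times\mu_n$ matrix into $R$ square blocks turns each such containment into $R$ conditions of the form $\mathbf{T}_s\barxV^{[2]}\prec\barxV^{[1]}$. At a receiver $k>R+1$, whose desired transmitter uses $\barxV^{[2]}$, the $R+1$ interferers $1,\dots,R+1$ would generically fill the whole $R\mu_n$-dimensional space, so I would give up a vanishing fraction of the interference budget exactly as was done for Receiver $4$: fix a reference transmitter $j_0\le R+1$ and one of the matrices $\mathbf{T}_{s_0}$ already obtained from the receivers in $\{1,\dots,R+1\}$, write $\barxV^{[1]}=[\barxV^{[1]}_u~\barxV^{[1]}_{\epsilon_n}]$ with $\barxV^{[1]}_u=\mathbf{T}_{s_0}\barxV^{[2]}$, and require $\barxH^{[kj_0]}\barxV^{[1]}_u$ together with all $\barxH^{[kj]}\barxV^{[2]}$, $j\in\{R+2,\dots,K\}\setminus\{k\}$, to lie in the span of the remaining $R$ reference interferers; then user $k$ receives only $d^{[2]}=(\tfrac{R}{R+1}-\epsilon_n)\mu_n$ streams, with $\epsilon_n=R\big[(n+1)^{\Gamma}-n^{\Gamma}\big]/\big[(R+1)(n+1)^{\Gamma}\big]\to0$. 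After the same inversion-and-block-splitting step these requirements again reduce to finitely many conditions $\mathbf{T}_s\barxV^{[2]}\prec\barxV^{[1]}$, and collecting the conditions over all $K$ receivers produces a finite list $\mathbf{T}_1,\dots,\mathbf{T}_{\Gamma}$ of $\mu_n\times\mu_n$ matrices (whose number is some explicit integer of order $RK^2$ — this is how $\Gamma$ is chosen).

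All of these conditions are then met simultaneously by the construction displayed above: with $\mathbf{w}=[1~\cdots~1]^T$, let the columns of $\barxV^{[2]}$ be the $R$-fold union $\{\prod_{s}(\mathbf{T}_s)^{\alpha_s}\mathbf{w}\}$ in which the exponents run over $n$ consecutive values in each group, and the columns of $\barxV^{[1]}$ the analogous $R$-fold union with $n+1$ consecutive values in each group. Because multiplying by any $\mathbf{T}_{s_0}$ merely raises the $s_0$-th exponent by one, every $\mathbf{T}_{s_0}\barxV^{[2]}\prec\barxV^{[1]}$ holds, and counting exponent tuples gives $d^{[1]}=R(n+1)^{\Gamma}$, $d^{[2]}=Rn^{\Gamma}$, $\mu_n=(R+1)(n+1)^{\Gamma}$ as assumed. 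Each receiver then zero-forces the (now low-dimensional) interference and recovers its streams, so $\tfrac{R}{R+1}K$ degrees of freedom per orthogonal time dimension are approached as $n\to\infty$.

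The hard part is the two genericity facts on which the last step rests. First, for the ordered products $\prod_s(\mathbf{T}_s)^{\alpha_s}$ to be well defined the matrices $\mathbf{T}_s$ must commute, which forces them to be diagonal; this holds because of the block-diagonal structure of the extended channel matrices, and is the statement proved in Appendix~\ref{apdx:simo}. Second, for zero forcing to work the desired signal vectors at every receiver must be linearly independent of the aligned interference vectors (equivalently, the interference subspace must have exactly its nominal dimension and meet the desired subspace trivially); since all entries involved are polynomials in the channel coefficients, it suffices to exhibit a single channel realization making the relevant determinant nonzero and then invoke that a nonzero polynomial vanishes only on a set of measure zero — again the content of Appendix~\ref{apdx:simo}. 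Everything else is dimension bookkeeping and the limit $n\to\infty$.
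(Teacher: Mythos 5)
Your proposal is correct and follows essentially the same route as the paper's proof in Appendix~\ref{apdx:simo}: the same extension length $\mu_n=(R+1)(n+1)^{\Gamma}$ with $\Gamma=KR(K-R-1)$, the same two shared precoding matrices $\barxV^{[1]}$ and $\barxV^{[2]}$ built as products of powers of the block matrices $\mathbf{T}^{[kj]}_i$ applied to the all-ones vector, the same receiver-by-receiver alignment conditions (including sacrificing $\epsilon_n\mu_n$ streams at users $R+2,\dots,K$ and reusing $\mathbf{T}^{[1(R+2)]}_1\barxV^{[2]}$ as the aligned part of $\barxV^{[1]}$), and the same two genericity facts, namely diagonality hence commutativity of the $\mathbf{T}^{[kj]}_i$ and almost-sure linear independence of the desired and interference subspaces. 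No substantive differences from the paper's argument.
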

\begin{proof}
We provide the proof in Appendix \ref{apdx:simo}.
\end{proof}

Next, we present the innerbound on the degrees of freedom for the
$K$ user MIMO Gaussian interference channel in the following
theorem:
\begin{theorem}\label{thm:innerbound}
For the time-varying $K$ user MIMO Gaussian interference channel
with channel coefficients drawn from a continuous distribution and
$M$ antennas at each transmitter and $N$ antennas at each receiver,
$K\min(M,N)$ degrees of freedom can be achieved if $K \leq R$ and
$\frac{R}{R+1}\min(M,N)K$ degrees of freedom can be achieved if
$K>R$ where $R=\lfloor\frac{\max(M,N)}{\min{(M,N)}}\rfloor$, i.e.
\begin{equation*}
\eta=d_1+\cdots+d_K \geq \min{(M,N)}K~1(K \leq
R)+\frac{R}{R+1}\min(M,N)K~1(K>R)
\end{equation*}
where 1(.) is the indicator function and $d_i$ represents the
individual degrees of freedom achieved by user $i$.
\end{theorem}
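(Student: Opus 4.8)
The plan is to deduce everything from Theorem~\ref{thm:simo} (the SIMO result) together with an elementary zero‑forcing argument, via a ``virtual node'' decomposition that turns the $M\times N$ MIMO channel into a SIMO channel (or, by duality, a MISO channel) of the type covered by Theorem~\ref{thm:simo}. Throughout I would assume without loss of generality that $N\ge M$, so that $\min(M,N)=M$ and $R=\lfloor N/M\rfloor$; the case $M>N$ is dual and is handled at the end by the reciprocity of interference alignment \cite{Gomadam_Cadambe_Jafar_dist}, interchanging the roles of transmitters and receivers.

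\textbf{Case $K\le R$.} Here plain zero forcing at the receivers already suffices, with no symbol extension and no need for time variation. Let Transmitter $j$ send $M$ independent streams through an arbitrary full‑rank $M\times M$ precoder. At Receiver $k$, the desired signal occupies the column span of $\xH^{[kk]}$, which is $M$‑dimensional almost surely, while the interference occupies the span of the columns of $[\xH^{[kj]}]_{j\ne k}$, a subspace of dimension $(K-1)M$. Since $K\le R$ gives $KM\le RM\le N$, the $N\times KM$ matrix collecting all these columns has full column rank almost surely; hence the desired subspace is in general position relative to the interference subspace, and projecting onto the orthogonal complement of the interference (dimension $N-(K-1)M\ge M$) leaves an $M$‑dimensional interference‑free observation of the desired streams. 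Summing over $k$ gives $KM=K\min(M,N)$ degrees of freedom.

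\textbf{Case $K>R$.} First I would observe that it is enough to treat $N=RM$: a receiver with $N\ge RM$ antennas may simply ignore $N-RM$ of them, so achievability for the $M\times RM$ channel (which still has $\lfloor RM/M\rfloor=R$) implies it for $M\times N$. Then split each Transmitter $j$ into $M$ virtual single‑antenna transmitters, one per physical antenna, and split each Receiver $k$ into $M$ virtual $R$‑antenna receivers by partitioning its $RM$ antennas into $M$ disjoint groups of size $R$, and require virtual transmitter $(j,m)$ to deliver a substream to virtual receiver $(j,m)$. The channel from virtual transmitter $(j,m')$ to virtual receiver $(k,m)$ is the $R\times 1$ subvector of $\xH^{[kj]}$ given by column $m'$ and the rows of antenna‑group $m$; as $(j,m',k,m)$ ranges over all choices these subvectors use pairwise distinct entries of the matrices $\{\xH^{[kj]}\}$, so the induced $KM$‑user channel is exactly a SIMO Gaussian interference channel with a single antenna at each transmitter and $R$ antennas at each receiver, time‑varying, with coefficients i.i.d.\ from a continuous distribution and bounded away from $0$ and $\infty$ --- precisely the hypotheses of Theorem~\ref{thm:simo}. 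Since $K>R$ forces $KM\ge(R+1)M>R+1$ except in the single boundary case $M=1,\ K=R+1$, Theorem~\ref{thm:simo} gives $\frac{R}{R+1}$ degrees of freedom per orthogonal time dimension to each virtual user; adding the $M$ virtual users belonging to original user $j$ gives $\frac{R}{R+1}M$ for user $j$, hence $\frac{R}{R+1}MK=\frac{R}{R+1}\min(M,N)K$ in total. The lone boundary case $M=1,\ K=R+1$ (so $N=R$ after the reduction) is the $(R+1)$‑user, $R$‑receive‑antenna SIMO channel, for which activating only $R$ of the transmitters and zero forcing at the corresponding receivers already attains $R=\frac{R}{R+1}(R+1)$ degrees of freedom. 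Finally, the case $M>N$ is the mirror image: reduce $M$ to $RN$ (a transmitter with more antennas may restrict its beamformers to a fixed $RN$‑antenna subset), split each transmitter into $N$ virtual $R$‑antenna transmitters and each receiver into $N$ virtual single‑antenna receivers, obtaining a $KN$‑user MISO interference channel, and invoke the reciprocal form of Theorem~\ref{thm:simo}; for $K\le R$ transmit‑side zero forcing works because Transmitter $j$ can beamform in the $\bigl(M-(K-1)N\bigr)$‑dimensional --- hence at least $N$‑dimensional --- null space of its aggregate cross‑channel to the other $K-1$ receivers.

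\textbf{Expected obstacle.} All the real content sits in Theorem~\ref{thm:simo}; what remains is essentially bookkeeping. The step that needs the most care is verifying that dropping unused antennas and performing the transmitter/receiver splitting genuinely produces an instance of the SIMO (resp.\ MISO) interference channel satisfying Theorem~\ref{thm:simo}'s hypotheses --- in particular that the induced channel coefficients stay jointly i.i.d.\ from a continuous distribution --- together with dispatching the two degenerate situations ($M=1,\ K=R+1$ and its MISO dual) where the hypothesis $K>R+1$ of Theorem~\ref{thm:simo} just fails and a one‑line zero‑forcing argument takes its place.
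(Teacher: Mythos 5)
Your proposal is correct and follows essentially the same route as the paper: assume $M\le N$ by reciprocity, use zero forcing for $K\le R$, and for $K>R$ discard the extra $N-RM$ receive antennas and split each user into $M$ virtual single-antenna-transmitter, $R$-antenna-receiver users so that Theorem~\ref{thm:simo} applies to the resulting $KM$-user SIMO channel. The only (harmless) divergence is at $K=R+1$, which the paper dispatches by discarding one user and zero forcing while you invoke the SIMO theorem directly when $M>1$ and reserve the zero-forcing fallback for the single boundary case $M=1$; your explicit check that the disjoint antenna groups yield i.i.d.\ coefficients is a point the paper leaves implicit.
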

\begin{proof}
When $K\leq R$, the achievable scheme is based on beamforming and
zero forcing. There is a reciprocity of such scheme discussed in
\cite{Cadambe_Jafar_X}. It is shown that the degrees of freedom is
unaffected if all transmitters and receivers are switched. For
example, the degrees of freedom of the $2$ user MISO interference
channel with 2 transmit antennas and a single receive antenna is the
same as that of the 2 user SIMO interference channel with a single
transmit antenna and 2 receive antennas. When $K>R$, the achievable
scheme is based on interference alignment. There is a reciprocity of
alignment which shows that if interference alignment is feasible on
the original channel then it is also feasible on the reciprocal
channel \cite{Gomadam_Cadambe_Jafar_dist}. Therefore, without loss
of generality, we assume that the number of transmit antennas is
less than or equal to that of receive antennas, i.e. $M \leq N$. As
a result, we need to show that $KM$ degrees of freedom can be
achieved if $K \leq R$ and $\frac{R}{R+1}MK$ degrees of freedom can
be achieved if $K>R$ where
$R=\lfloor\frac{N}{M}\rfloor$. The case when $R=1$ is solved in \cite{Cadambe_Jafar_int}. Therefore, we only consider the cases when $R>1$ here.\\
1) $K \leq R$: Each transmitter sends $M$ independent data streams
along beamforming vectors. Each receiver gets $M$ interference free
streams by zero forcing the interference from unintended
transmitters. As a result, each user can achieve $M$ degrees of
freedom for a total of $KM$ degrees of freedom.\\
2) $K>R$: When $K=R+1$, by discarding one user, we have a $R$ user
interference channel. $RM$ degrees of freedom can be achieved on
this channel using the achievable scheme described above. When
$K>R+1$, first we get $RM$ antennas receive nodes by discarding
$N-RM$ antennas at each receiver. Then, suppose we view each user
with $M$ antennas at the transmitter and $RM$ antennas at the
receiver as $M$ different users each of which has a single transmit
antenna and $R$ receive antennas. Then, instead of a $K$ user MIMO
interference channel we obtain a $KM$ user SIMO interference channel
with $R$ antennas at each receiver. By the result of Theorem
\ref{thm:simo}, $\frac{R}{R+1}KM$ degrees of freedom can be achieved
on this interference channel. Thus, we can also achieve
$\frac{R}{R+1}KM$ degrees of freedom on the $K$ user MIMO
interference channel with time-varying channel coefficients.
\end{proof}
Finally, we show that the innerbound and outerbound are tight when
the ratio $\frac{\max(M,N)}{\min(M,N)}$ is equal to an integer. We
present the result in the following corollary.
\begin{corollary}\label{thm:mimo}
For the time-varying $K$ user MIMO Gaussian interference channel
with $M$ transmit antennas and $N$ receive antennas, the total
number of degrees of freedom is equal to $K\min(M,N)$ if $K \leq R$
and $\frac{R}{R+1}\min(M,N)K$ if $K>R$ when
$R=\frac{\max(M,N)}{\min(M,N)}$ is equal to an integer, i.e.
\begin{equation*}
\eta=d_1+\cdots+d_K = \min{(M,N)}K~1(K \leq
R)+\frac{R}{R+1}\min(M,N)K~1(K>R)
\end{equation*}
\end{corollary}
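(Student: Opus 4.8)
\textbf{Proof proposal for Corollary \ref{thm:mimo}.}
The plan is to simply combine the outerbound of Theorem \ref{thm:outerbound} with the innerbound of Theorem \ref{thm:innerbound}, using the fact that when $R = \frac{\max(M,N)}{\min(M,N)}$ is an integer we have $\lfloor \frac{\max(M,N)}{\min(M,N)} \rfloor = R$ and, crucially, $\max(M,N) = R\min(M,N)$ exactly, which makes the two bounds coincide. First I would dispose of the regime $K \leq R$: here Theorem \ref{thm:outerbound} gives $\eta \leq K\min(M,N)$ and Theorem \ref{thm:innerbound} gives $\eta \geq K\min(M,N)$, so $\eta = K\min(M,N)$ immediately.

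Next I would treat the regime $K > R$. Theorem \ref{thm:outerbound} gives $\eta \leq \frac{\max(M,N)}{R+1}K$. Substituting $\max(M,N) = R\min(M,N)$, the right-hand side becomes $\frac{R\min(M,N)}{R+1}K = \frac{R}{R+1}\min(M,N)K$. On the other hand, Theorem \ref{thm:innerbound} gives $\eta \geq \frac{R}{R+1}\min(M,N)K$. Since the lower and upper bounds agree, we conclude $\eta = \frac{R}{R+1}\min(M,N)K$, which establishes the claimed equality.

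Since the corollary is an immediate consequence of the two preceding theorems once the integer-ratio hypothesis is used to remove the floor and equate $\max(M,N)$ with $R\min(M,N)$, there is essentially no obstacle in the argument itself; the only thing worth stating explicitly is that the hypotheses of Theorem \ref{thm:innerbound} (time-varying coefficients drawn from a continuous distribution) are inherited from the corollary's statement, so the achievability half applies verbatim, while the converse half of Theorem \ref{thm:outerbound} holds with no restriction on time variation. I would therefore present the proof in two short lines, one per regime, and point out that the MISO and SIMO interference channels are recovered as the special cases $N = 1$ (so $R = M$) and $M = 1$ (so $R = N$), where the bounds become tight for every $M$ or $N$.
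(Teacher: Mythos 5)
Your proposal is correct and matches the paper's own argument exactly: the corollary is proved by observing that the integer-ratio hypothesis makes $\max(M,N)=R\min(M,N)$, so the outerbound of Theorem \ref{thm:outerbound} and the innerbound of Theorem \ref{thm:innerbound} coincide in both regimes. Nothing is missing.
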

\begin{proof}
The proof is obtained by directly verifying that the innerbound and
outerbound match when the ratio $R=\frac{\max(M,N)}{\min(M,N)}$ is
equal to an integer. When $K \leq R$, the innerbound and outerbound
always match which is $\min{(M,N)}K$. When $K > R$, the innerbound
and outerbound match when $\frac{R}{R+1}\min(M,N)K =
\frac{\max(M,N)}{R+1}K$ which implies that $R\min(M,N)=\max(M,N)$.
In other words, when either the number of transmit antennas is an
integer multiple of that of receive antennas or vice versa, the
total number of degrees of freedom is equal to
$\frac{R}{R+1}\min(M,N)K$.
\end{proof}

{\it Remark}: For the $K$ user MIMO Gaussian interference channel
with $M,N$ antennas at the transmitter and the receiver
respectively, if $K \leq R$ where
$R=\lfloor\frac{\max(M,N)}{\min{(M,N)}}\rfloor$ then the total
number of degrees of freedom is $\min{(M,N)}K$. This result can be
extended to the same channel with constant channel coefficients.

{\it Remark}: If $\min(M,N)=1$, then Corollary \ref{thm:mimo} shows
that the total number of degrees of freedom of the $K$ user SIMO
Gaussian interference channel with $R$ receive antennas or the $K$
user MISO Gaussian interference channel with $R$ transmit antennas
is equal to $K~1(K \leq R)+\frac{R}{R+1}K~1(K>R)$.

\section{Achievable Degrees of Freedom for the MIMO interference channel with constant channel coefficients}
Note that the converse results and the results of the achievable
degrees of freedom based on merely zero forcing in previous sections
are also applicable to the same channel with constant channel
coefficients. The results of the achievable degrees of freedom based
on interference alignment are obtained under the assumption that the
channel coefficients are time-varying. It is not known if the
results can be extended to the same channel with constant channel
coefficients. Because the construction of precoding matrices
$\barxV^{[1]}$ and $\barxV^{[2]}$ requires commutative property of
multiplication of diagonal matrices $\mathbf{T}^{[j]}_i$. But for
the MIMO scenarios, those matrices are not diagoal and commutative
property cannot be exploited. In fact, the degrees of freedom for
the interference channel with constant channel coefficients remains
an open problem for more than 2 users. One known scenario is the 3
user MIMO Gaussian interference channel with $M$ antennas at each
node. In \cite{Cadambe_Jafar_int}, it is shown that the total number
of degrees of freedom is $\frac{3}{2}M$. The achievable scheme is
based on interference alignment on signal vectors. In
\cite{Cadambe_Jafar_Shamai}, the first known example of a $K$ user
Gaussian interference channel with single antenna nodes and constant
channel coefficients are provided to achieve the outerbound on the
degrees of freedom. The achievable scheme is based on interference
alignment on signal levels rather than signal vectors. In this
section, we will provide examples where interference alignment
combined with zero forcing can achieve more degrees of freedom than
merely zero-forcing for some MIMO Gaussian interference channels
with constant channel coefficients. More general results are
provided in Appendix \ref{apdx:mimo}.

{\it Example 1}: Consider the 4 user MIMO Gaussian interference
channel with 4 antennas at each transmitter and 8 antennas at each
receiver. Note that for the 3 user MIMO interference channel with
the same antenna deployment, the total number of degrees of freedom
is 8. Also, for the 4 user case, only 8 degrees of freedom can be
achieved by merely zero forcing. However, we will show that using
interference alignment combined with zero forcing, 9 degrees of
freedom can be achieved on this interference channel without channel
extension. In other words, the 4 user MIMO interference channel with
4, 8 antennas at each transmitter and receiver respectively can
achieve more degrees of freedom than the 3 user interference channel
with the same antenna deployment. Besides, more degrees of freedom
can be achieved on this 4 user interference channel by using
interference alignment combined with zero forcing than merely zero
forcing. Next, we show that user $1,2,3$ can achieve $d_i=2, \forall
i=1,2,3$ degrees of freedom and user 4 can achieve $d_4=3$ degrees
of freedom resulting in a total of 9 degrees of freedom achieved on
this channel. Transmitter $i$ sends message $W_i$ to Receiver $i$
using $d_i$ independently encoded streams along vectors
$\mathbf{v}^{[i]}_m$, i.e.,
\begin{eqnarray*}
\mathbf{X}^{[i]}&=&\sum_{m=1}^{2}x^{[i]}_m\mathbf{v}_m^{[i]}=\mathbf{V}^{[i]}\mathbf{X}^{i},~i=1,2,3\\
\mathbf{X}^{[4]}&=&\sum_{m=1}^{3}x^{[4]}_m\mathbf{v}_m^{[4]}=\mathbf{V}^{[4]}\mathbf{X}^{4}
\end{eqnarray*}
where $\xV^{[i]}=[\xv^{[i]}_1~\xv^{[i]}_2], i=1,2,3$ and
$\xV^{[4]}=[\xv^{[4]}_1~\xv^{[4]}_2~\xv^{[4]}_3]$. The signal at
Receiver $j$ can be written as
\begin{equation*}
\mathbf{Y}^{[j]}=\sum_{i=1}^4\mathbf{H}^{[ji]}\mathbf{V}^{[i]}\mathbf{X}^{i}+\mathbf{Z}^{[j]}.
\end{equation*}
In order for each receiver to decode its message by zero forcing the
interference signals, the dimension of the space spanned by the
interference signal vectors has to be less than or equal to $8-d_i$.
Since there are $9-d_i$ interference vectors at receiver $i$, we
need to align $(9-d_i)-(8-d_i)=1$ interference signal vector at each
receiver. This can be achieved by if one interference vector lies in
the space spanned by other interference vectors at each receiver.
Mathematically, we choose the following alignments
\begin{eqnarray}
\text{span}(\mathbf{H}^{[14]}\mathbf{v}^{[4]}_1) \subset
\text{span}(\left[ \mathbf{H}^{[12]}\mathbf{V}^{[2]}~
\mathbf{H}^{[13]}\mathbf{V}^{[3]} \right]) &\Rightarrow&
\text{span}(\underbrace{[\xH^{[12]}~\xH^{[13]}]^{-1}\mathbf{H}^{[14]}}_{\mathbf{T}^{[1]}}
\mathbf{v}^{[4]}_1) \subset \text{span}(\left[ \begin{array}{cc} \xV^{[2]}& \mathbf{0}\\
\mathbf{0}&\xV^{[3]} \end{array} \right]) \notag\\
&\Rightarrow&
\text{span}( \left[ \begin{array}{c} \mathbf{T}^{[1]}_1 \xv^{[4]}_1\\
\mathbf{T}^{[1]}_2 \xv^{[4]}_1\\ \end{array} \right] )\subset
\text{span}(\left[ \begin{array}{cc} \xV^{[2]}& \mathbf{0}\\
\mathbf{0}&\xV^{[3]} \end{array} \right])\label{iacon1}
\end{eqnarray}
\begin{eqnarray}
\text{span}(\mathbf{H}^{[24]}\mathbf{v}^{[4]}_1) \subset
\text{span}(\left[ \mathbf{H}^{[21]}\mathbf{V}^{[1]}~
\mathbf{H}^{[23]}\mathbf{V}^{[3]} \right]) &\Rightarrow&
\text{span}(\underbrace{[\mathbf{H}^{[21]}~\mathbf{H}^{[23]}]^{-1}\mathbf{H}^{[24]}}_{\mathbf{T}^{[2]}}\mathbf{v}^{[4]}_1)
\subset \text{span}(\left[ \begin{array}{cc} \xV^{[1]}& \mathbf{0}\\
\mathbf{0}&\xV^{[3]} \end{array} \right])\notag\\
&\Rightarrow&
\text{span}( \left[ \begin{array}{c} \mathbf{T}^{[2]}_1 \xv^{[4]}_1\\
\mathbf{T}^{[2]}_2 \xv^{[4]}_1\\ \end{array} \right] )\subset
\text{span}(\left[ \begin{array}{cc} \xV^{[1]}& \mathbf{0}\\
\mathbf{0}&\xV^{[3]} \end{array} \right])\label{iacon2}
\end{eqnarray}
\begin{eqnarray}
\text{span}(\mathbf{H}^{[32]}\mathbf{v}^{[2]}_1) \subset
\text{span}(\left[\mathbf{H}^{[31]}\mathbf{V}^{[1]}~
\mathbf{H}^{[34]}\mathbf{V}^{[4]} \right]) &\Rightarrow&
\text{span}(\underbrace{[\mathbf{H}^{[31]}~\mathbf{H}^{[34]}]^{-1}\mathbf{H}^{[32]}}_{\mathbf{T}^{[3]}}\mathbf{v}^{[2]}_1)
\subset \text{span}(\left[ \begin{array}{cc} \xV^{[1]}& \mathbf{0}\\
\mathbf{0}&\xV^{[4]} \end{array} \right]))\notag\\
&\Rightarrow&
\text{span}( \left[ \begin{array}{c} \mathbf{T}^{[3]}_1 \xv^{[2]}_1\\
\mathbf{T}^{[3]}_2 \xv^{[2]}_1\\ \end{array} \right] )\subset
\text{span}(\left[ \begin{array}{cc} \xV^{[1]}& \mathbf{0}\\
\mathbf{0}&\xV^{[4]} \end{array} \right]) \label{iacon3}\\
\text{span}(\mathbf{H}^{[41]}\mathbf{v}^{[1]}_1) \subset
\text{span}(\left[ \mathbf{H}^{[42]}\mathbf{V}^{[2]}
~\mathbf{H}^{[43]}\mathbf{V}^{[3]} \right]) &\Rightarrow&
\text{span}(\underbrace{[\mathbf{H}^{[42]}~\mathbf{H}^{[43]}]^{-1}\mathbf{H}^{[41]}}_{\mathbf{T}^{[4]}}\mathbf{v}^{[1]}_1)
\subset \text{span}(\left[ \begin{array}{cc} \xV^{[2]}& \mathbf{0}\\
\mathbf{0}&\xV^{[3]} \end{array}
\right]) \notag\\
&\Rightarrow&
\text{span}( \left[ \begin{array}{c} \mathbf{T}^{[4]}_1 \xv^{[1]}_1\\
\mathbf{T}^{[4]}_2 \xv^{[1]}_1\\ \end{array} \right] )\subset
\text{span}(\left[ \begin{array}{cc} \xV^{[2]}& \mathbf{0}\\
\mathbf{0}&\xV^{[3]} \end{array} \right])\label{iacon4}
\end{eqnarray}
where $\mathbf{T}^{[i]}$ is an $8 \times 4$ matrix which can be
written in a block matrix form:
\begin{equation}
\mathbf{T}^{[i]}=\left[\begin{array}{c}\mathbf{T}^{[i]}_{1}\\
\mathbf{T}^{[i]}_{2}\end{array}\right]~~i=1,2,3,4
\end{equation}
where $\mathbf{T}^{[i]}_{1}$ and $\mathbf{T}^{[i]}_{2}$ are $4
\times 4$ matrices. To satisfy the conditions \eqref{iacon1},
\eqref{iacon2}, \eqref{iacon3}, \eqref{iacon4}, we let
\begin{eqnarray*}
\mathbf{T}^{[1]}_1 \xv^{[4]}_1 = \xv^{[2]}_1 &~&
\text{span}(\mathbf{T}^{[1]}_2 \xv^{[4]}_1) =\text{span}
(\xv^{[3]}_1)\label{mimo_eigen2}\\
\mathbf{T}^{[2]}_1 \xv^{[4]}_1 = \xv^{[1]}_1 &~&
\text{span}(\mathbf{T}^{[2]}_2 \xv^{[4]}_1) = \text{span}
(\xv^{[3]}_1)\label{mimo_eigen3}\\
\mathbf{T}^{[3]}_1 \xv^{[2]}_1 =\xv^{[1]}_2 &~&
\mathbf{T}^{[3]}_2 \xv^{[2]}_1 = \xv^{[4]}_2\\
\mathbf{T}^{[4]}_1 \xv^{[1]}_1 = \xv^{[2]}_2 &~& \mathbf{T}^{[4]}_2
\xv^{[1]}_1 = \xv^{[3]}_2
\end{eqnarray*}
Notice once $\xv^{[4]}_1$ is chosen, all other vectors can be solved
from the above equations. To solve $\xv^{[4]}_1$, we have
\begin{eqnarray*}
\text{span}(\mathbf{T}^{[1]}_1 \xv^{[4]}_1)&=& \text{span}(\mathbf{T}^{[2]}_2 \xv^{[4]}_1)\\
\Rightarrow
\text{span}((\mathbf{T}^{[2]}_{2})^{-1}\mathbf{T}^{[1]}_{2}\mathbf{v}^{[4]}_{1})&=& \text{span} (\mathbf{v}^{[4]}_{1})\\
\Rightarrow \mathbf{v}^{[4]}_{1}&=&\mathbf{e},
\end{eqnarray*}
where $\mathbf{e}$ is an eigenvector of matrix
$(\mathbf{T}^{[2]}_2)^{-1}\mathbf{T}^{[1]}_2$. Note that the above
construction only specifies $\mathbf{V}^{[i]}, \forall i=1,2,3$ and
$\xv^{[4]}_1, \xv^{[4]}_2$. The remaining $\mathbf{v}^{[4]}_3$ can
be picked randomly according to a continuous distribution so that
all columns of $\mathbf{V}^{[i]}$ are linearly independent.

Through interference alignment, we ensure that the interference
vectors span a small enough signal space. We need to verify that the
desired signal vectors, i.e., $\mathbf{H}^{[ii]}\mathbf{V}^{[i]}$
are linearly independent of interference vectors so that each
receiver can decode its message using zero forcing. Notice that the
direct channel matrices $\mathbf{H}^{[ii]}, i=1,2,3,4$ do not appear
in the interference alignment equations, $\mathbf{V}^{[i]}$
undergoes an independent linear transformation by multiplying
$\mathbf{H}^{[ii]}$. Therefore, at each receiver the desired signal
vectors are linearly independent of the interference signal vectors
with probability one. As a result, user $i$ can achieve $d_i$
degrees of freedom and a total of 9 degrees of freedom can be
achieved.

{\it Example 2}: Consider the 4 user MIMO Gaussian interference
channel with 2 antennas at each transmitter and 4 antennas at each
receiver. We show that 9 degrees of freedom can be achieved on the
2-symbol extension of the original channel and hence 4$\frac{1}{2}$
degrees of freedom per channel use can be achieved. Since only 4
degrees of freedom can be achieved using merely zero forcing,
$\frac{1}{2}$ more degrees of freedom is achieved using interference
alignment scheme. Note that although we have equivalently a 4 user
interference channel with $4 \times 8$ channel on the 2-symbol
extension channel, we cannot use the same achievable scheme used in
Example 1 due to the block diagonal structure of the extension
channel matrix. Consider 2-symbol extension of the channel. The
channel input-output relationship is
\begin{equation*}
\barxY^{[j]}=
\sum_{i=1}^{4}\barxH^{[ji]}\barxX^{[i]}+\barxZ^{[j]}~~\forall
j=1,2,3,4
\end{equation*}
where the overbar notation represents the 2-symbol extensions so
that
\begin{equation*}
\mathbf{\bar{X}}\triangleq\left[\begin{array}{c}\mathbf{X}(2t)\\
\mathbf{X}(2t+1)\end{array}\right]\quad
\mathbf{\bar{Z}}\triangleq\left[\begin{array}{c}\mathbf{Z}(2t)\\
\mathbf{Z}(2t+1)\end{array}\right]
\end{equation*}
where $\mathbf{X}$ and $\mathbf{Z}$ are $2 \times 1$ and $4 \times
1$ vectors respectively, and
\begin{equation*}
\barxH \triangleq \left[\begin{array}{cc}\xH & \mathbf{0}\\
\mathbf{0} & \xH \end{array}\right].
\end{equation*}
where $\xH$ is the $4 \times 2$ channel matrix. We assign
$d_1=d_2=d_3=2$ and $d_4=3$ degrees of freedom to message
$W_1,W_2,W_3,W_4$ respectively for a total 9 degrees of freedom over
the 2-symbol extension channel. Transmitter $i$ sends message $W_i$
in the form of $d_i$ independently encoded streams along the
direction vectors $\barxv^{[i]}_1,\ldots,\barxv^{[i]}_{d_i}$, each
of dimension $4 \times 1$, so that we have:
\begin{equation*}
\barxX^{[i]}=\sum_{m=1}^{d_i}\barxv^{[i]}_{m}x^{[i]}_m=\barxV^{[i]}\xX^{[i]}\quad
i=1,2,3,4
\end{equation*}
where $\barxV^{[i]}$ and $\xX^{[i]}$ are $4 \times d_i$ and $d_i
\times 1$ matrices respectively. In order to get $d_i$ interference
free dimension at Receiver $i$, we need to align 1 interference
vector at each receiver. This can be achieved if one interference
vector lies in the space spanned by other interference vectors at
each receiver. Mathematically, we choose the following alignments:
\begin{eqnarray}
\text{span}(\barxH^{[12]}\barxv^{[2]}_1) \subset \text{span}(\left[
\barxH^{[13]}\barxV^{[3]}~ \barxH^{[14]}\barxV^{[4]}
\right])\Rightarrow
\text{span}(\underbrace{[\barxH^{[13]}~\barxH^{[14]}]^{-1}\barxH^{[12]}}_{\mathbf{T}^{[1]}}
\barxv^{[2]}_1) \subset \text{span}(\left[ \begin{array}{cc}
\barxV^{[3]}& \mathbf{0}\\  \mathbf{0}&\barxV^{[4]} \end{array} \right])\\
\text{span}(\barxH^{[23]}\barxv^{[3]}_1) \subset\text{span}(\left[
\barxH^{[21]}\barxV^{[1]}~ \barxH^{[24]}\barxV^{[4]} \right])
\Rightarrow
\text{span}(\underbrace{[\barxH^{[21]}~\barxH^{[24]}]^{-1}\barxH^{[23]}}_{\mathbf{T}^{[2]}}\barxv^{[3]}_1)
\subset \text{span}(\left[ \begin{array}{cc} \barxV^{[1]}& \mathbf{0}\\
\mathbf{0}& \barxV^{[4]} \end{array} \right])\\
\text{span}(\barxH^{[34]}\barxv^{[4]}_1) \subset
\text{span}(\left[\barxH^{[31]}\barxV^{[1]}~ \barxH^{[32]}
\barxV^{[2]} \right])\Rightarrow
\text{span}(\underbrace{[\barxH^{[31]}~\barxH^{[32]}]^{-1}\barxH^{[34]}}_{\mathbf{T}^{[3]}}\barxv^{[4]}_1
\subset \text{span}(\left[ \begin{array}{cc} \barxV^{[1]}& \mathbf{0}\\
\mathbf{0}& \barxV^{[2]} \end{array} \right]))\\
\text{span}(\barxH^{[41]}\barxv^{[1]}_1) \subset
\text{span}(\left[\barxH^{[42]} \barxV^{[2]}
~\barxH^{[43]}\barxV^{[3]} \right]) \Rightarrow
\text{span}(\underbrace{[\barxH^{[42]}~\barxH^{[43]}]^{-1}\barxH^{[41]}}_{\mathbf{T}^{[4]}}\barxv^{[1]}_1)
\subset \text{span}(\left[ \begin{array}{cc} \barxV^{[2]}& \mathbf{0}\\
\mathbf{0}& \barxV^{[3]} \end{array} \right])
\end{eqnarray}
where $\mathbf{T}^{[i]}$ is the $8 \times 4$ matrix which can be
written in a block matrix form:
\begin{equation}
\mathbf{T}^{[i]}=\left[\begin{array}{c}\mathbf{T}^{[i]}_{1}\\
\mathbf{T}^{[i]}_{2}\end{array}\right]~~i=1,2,3,4
\end{equation}
The above equations can be satisfied if
\begin{eqnarray}
\mathbf{T}^{[1]}\barxv^{[2]}_1 =
\left[\begin{array}{c}\barxv^{[3]}_1\\\barxv^{[4]}_1
\end{array}\right]~ \mathbf{T}^{[2]}\barxv^{[3]}_1 =
\left[\begin{array}{c}\barxv^{[1]}_1\\\barxv^{[4]}_2
\end{array}\right]~ \mathbf{T}^{[3]}\barxv^{[4]}_1 =
\left[\begin{array}{c}\barxv^{[1]}_2\\\barxv^{[2]}_2
\end{array}\right]~ \mathbf{T}^{[4]}\barxv^{[1]}_1 =
\left[\begin{array}{c}\barxv^{[2]}_3\\\barxv^{[3]}_2
\end{array}\right]
\end{eqnarray}
Notice that once we pick $\barxv^{[2]}_1$, all other vectors can be
solved from above equations. $\barxv^{[2]}_1$ can be chosen randomly
according to a continuous distribution so that all vectors are
linearly independent with probability one. Also, since all the
vectors are chosen independently of the direct channel matrices
$\barxH^{[ii]}$ and all entries of $\barxV^{[i]}$ are not equal to
zero almost surely, the desired signal vectors are linearly
independent of the interference vectors at each receiver. As a
result, Receiver $i$ can decode its message by zero forcing the
interference to achieve $d_i$ degrees of freedom for a total of 9
degrees of freedom over the 2-symbol extension channel. Therefore,
$4\frac{1}{2}$ degrees of freedom per channel use can be achieved on
the original channel.

\section{conclusion}
We investigate the degrees of freedom for the $K$ user MIMO Gaussian
interference channel with $M,N$ antennas at each transmitter and
receiver, respectively. The motivation of this work is the potential
benefits of interference alignment scheme shown recently to achieve
the capacity of certain wireless networks within $o(\log(SNR))$. In
this work, interference alignment scheme is also found to be optimal
in achieving the degrees of freedom of the $K$ user $M \times N$
MIMO Gaussian interference channel if the ratio
$\frac{\max(M,N)}{\min(M,N)}$ is equal to an integer with
time-varying channel coefficients drawn from a continuous
distribution. We also explore the achievable degrees of freedom for
the MIMO interference channel with constant channel coefficients
using interference alignment combined with zero forcing. We provide
some examples where using interference alignment can achieve more
degrees of freedom than merely zero forcing.

\appendices
\section{Proof of Theorem \ref{thm:simo}}\label{apdx:simo}
\begin{proof}
Let $\Gamma =KR(K-R-1)$. We will develop a coding scheme based on
interference alignment to achieve a total of
$(R+1)R(n+1)^{\Gamma}+(K-R-1)Rn^{\Gamma}$ degrees of freedom over a
$\mu_n=(R+1)(n+1)^\Gamma$ symbol extension of the original channel.
Hence, a total of
$\frac{(R+1)R(n+1)^{\Gamma}+(K-R-1)Rn^{\Gamma}}{(R+1)(n+1)^\Gamma}$
degrees of freedom per orthogonal dimension can be achieved for any
arbitrary $n \in \mathbb{N}$. Taking supremum over all $n$ proves
the total number of degrees of freedom is equal to $\frac{RK}{R+1}$
as desired. Specifically, over the extended channel, user
$i=1,2,\cdots,R+1$ achieves $R(n+1)^{\Gamma}$ degrees of freedom and
other user $i=R+2,R+3,\cdots,K$ achieves $Rn^{\Gamma}$ degrees of
freedom. As a result, user $i=1,2,\cdots,R+1$ achieves
$\frac{R(n+1)^{\Gamma}}{(R+1)(n+1)^\Gamma}$ degrees of freedom and
user $i=R+2,R+3,\cdots,K$ achieves
$\frac{Rn^{\Gamma}}{(R+1)(n+1)^\Gamma}$ degrees of freedom per
channel use, i.e.
\begin{equation}
d_i = \frac{R(n+1)^{\Gamma}}{(R+1)(n+1)^\Gamma}~~~~ i=1,2,\cdots,R+1
\quad d_i = \frac{Rn^{\Gamma}}{(R+1)(n+1)^\Gamma}~~~~
i=R+2,R+3,\cdots,K
\end{equation}
This implies that
\begin{equation}
d_1 + d_2 + \cdots+ d_K \geq \sup_n
\frac{(R+1)R(n+1)^{\Gamma}+(K-R-1)Rn^{\Gamma}}{(R+1)(n+1)^\Gamma} =
 \frac{KR}{R+1}
\end{equation}
In the extended channel, the signal vector at the $k^{th}$ user's
receiver can be expressed as
\begin{equation*}
\bar{\mathbf{Y}}^{[k]}(t) =
\sum_{j=1}^{K}\bar{\mathbf{H}}^{[kj]}(t)\bar{\mathbf{X}}^{[j]}(t)+\bar{\mathbf{Z}}^{[k]}(t)
\end{equation*}
where $\bar{\mathbf{X}}^{[j]}(t)$ is a $\mu_n \times 1$ column
vector representing the $\mu_n$ symbol extension of the transmitted
symbol $x^{[j]}(t)$, i.e.
\begin{equation*}
\bar{\mathbf{X}}^{[j]}(t) \triangleq
\left[\begin{array}{c}x^{[j]}(\mu_n(t-1)+1)\\x^{[j]}(\mu_n(t-1)+2)\\\vdots\\x^{[j]}(\mu_nt)\end{array}\right]
\end{equation*}
Similarly, $\bar{\mathbf{Y}}(t)$ and $\bar{\mathbf{Z}}(t)$ represent
$\mu_n$ symbol extensions of the $\mathbf{Y}(t)$ and $\mathbf{Z}(t)$
respectively. $\bar{\mathbf{H}}^{[kj]}(t)$ is a $R\mu_n \times
\mu_n$ matrix representing the $\mu_n$ symbol extension of the
channel, i.e.
\begin{align}
  \bar{\mathbf{H}}^{[kj]}(t) =  \left[ \begin{array}{cccc}  \mathbf{h}^{[kj]}(\mu_n(t-1)+1) & \mathbf{0} & \ldots & \mathbf{0}\\
     \mathbf{0} & \mathbf{h}^{[kj]}(\mu_n(t-1)+2) & \ldots & \mathbf{0}\\
    \vdots & \vdots & \ddots & \vdots\\
     \mathbf{0} & \mathbf{0}& \cdots  & \mathbf{h}^{[kj]}(\mu_nt)
    \end{array}\right]
\end{align}
where $\mathbf{h}^{[kj]}$ is the $R \times 1$ channel vector.
Message $W_j$ ($j=1,2,\cdots,R+1$) is encoded at Transmitter $j$
into $R(n+1)^{\Gamma}$ independent streams $x^{[j]}_m(t)$,
$m=1,2,\ldots,R(n+1)^{\Gamma}$ along the same set of vectors
$\barxv^{[1]}_m(t)$ so that $\bar{\mathbf{X}}^{[j]}(t)$ is
\begin{equation*}
\barxX^{[j]}(t) = \sum_{m=1}^{R(n+1)^{\Gamma}} x^{[j]}_m(t)
\barxv_m^{[1]}(t) = \barxV^{[1]}(t) \xX^{[j]}(t)
\end{equation*}
where  $\mathbf{X}^{[j]}(t)$ is a $R(n+1)^{\Gamma}\times 1$ column
vector and $\bar{\mathbf{V}}^{[1]}(t)$ is a $ (R+1)(n+1)^\Gamma
\times R(n+1)^{\Gamma} $ dimensional matrix. Similarly, $W_j$
($j=R+2,\cdots,K$) is encoded at Transmitter $j$ into $Rn^{\Gamma}$
independent streams $x^{[j]}_m(t)$, $m=1,2,\ldots,Rn^{\Gamma}$ along
the same set of vectors $\barxv^{[2]}_m(t)$ so that
\begin{equation*}
\barxX^{[j]}(t) = \sum_{m=1}^{Rn^{\Gamma}} x^{[j]}_m(t)
\barxv_m^{[2]}(t) = \barxV^{[2]}(t) \xX^{[j]}(t)
\end{equation*}
The received signal at the $k^{th}$ receiver can
then be written as
\begin{equation*}
\barxY^{[k]}(t) = \sum_{j=1}^{R+1}\barxH^{[kj]}(t)
\barxV^{[1]}(t)\xX^{[j]}(t) +
\sum_{j=R+2}^{K}\barxH^{[kj]}(t)\barxV^{[2]}(t)\xX^{[j]}(t) +
\bar{\mathbf{Z}}^{[k]}(t)
\end{equation*}
We wish to design the direction vectors $\bar{\mathbf{V}}^{[1]}$ and
$\bar{\mathbf{V}}^{[2]}$ so that signal spaces are aligned at
receivers where they constitute interference while they are
separable at receivers where they are desired. As a result, each
receiver can decode its desired signal by zero forcing the
interference signals.

First consider Receiver $k$, $\forall k=1,2,\cdots,R+1$. Every
receiver needs a $R(n+1)^{\Gamma}$ interference free dimension out
of the $R(R+1)(n+1)^{\Gamma}$ dimensional signal space. Thus, the
dimension of the signal space spanned by the interference signal
vectors cannot be more than $R^2(n+1)^{\Gamma}$. Notice that all the
interference vectors from Transmitter
$1,2,\cdots,k-1,k+1,\cdots,R+1$ span a $R^2(n+1)^{\Gamma}$
dimensional subspace in the $R(R+1)(n+1)^{\Gamma}$ dimensional
signal space. Hence, we can align the interference signal vectors
from Transmitter $j$, $\forall j=R+2,R+3,\cdots,K$ within this
$R^2(n+1)^{\Gamma}$ dimensional subspace. Mathematically, we have
\begin{equation*}
\text{span}(\barxH^{[kj]}\barxV^{[2]}) \subset \text{span}
(\left[\bar{\mathbf{H}}^{[k1]}\barxV^{[1]}
~\bar{\mathbf{H}}^{[k2]}\barxV^{[1]} \cdots
\bar{\mathbf{H}}^{[k(k-1)]}\barxV^{[1]}~\bar{\mathbf{H}}^{[k(k+1)]}\barxV^{[1]}
\cdots \bar{\mathbf{H}}^{[k(R+1)]}\barxV^{[1]}\right])
\end{equation*}
where $\text{span} (\mathbf{A})$ represents the space spanned by the
columns of matrix $\mathbf{A}$. The above equation can be expressed
equivalently as
\begin{equation*}
\text{span}(\barxH^{[kj]}\barxV^{[2]}) \subset \text{span}
(\left[\bar{\mathbf{H}}^{[k1]}~\bar{\mathbf{H}}^{[k2]} \cdots
\bar{\mathbf{H}}^{[k(k-1)]}~\bar{\mathbf{H}}^{[k(k+1)]} \cdots
\bar{\mathbf{H}}^{[k(R+1)]}\right] \begin{tiny} \left[\begin{array}{ccccccc}\barxV^{[1]}&\mathbf{0}& \cdots &\mathbf{0}& \mathbf{0}&\cdots&\mathbf{0}\\ \mathbf{0}&\barxV^{[1]}&\cdots&\mathbf{0}&\mathbf{0}&\cdots&\mathbf{0}\\
\vdots& \vdots & \ddots &\vdots &\vdots &\ddots &\vdots\\
\mathbf{0}&\mathbf{0}& \cdots &\barxV^{[1]}& \cdots & \cdots & \mathbf{0}\\
\mathbf{0}&\mathbf{0}& \cdots & \cdots & \barxV^{[1]}& \cdots &\mathbf{0}\\
\vdots& \vdots & \ddots &\vdots &\vdots &\ddots &\vdots\\
\mathbf{0}& \mathbf{0} & \cdots &\mathbf{0} &\mathbf{0} &\cdots &
\barxV^{[1]}
\\ \end{array} \right]) \end{tiny}
\end{equation*}
Notice that $[\bar{\mathbf{H}}^{[k1]}~\bar{\mathbf{H}}^{[k2]} \cdots
\bar{\mathbf{H}}^{[k(k-1)]}~\bar{\mathbf{H}}^{[k(k+1)]} \cdots
\bar{\mathbf{H}}^{[k(R+1)]}]$ is a $R\mu_n \times R\mu_n$ square
matrix with full rank almost surely. Thus, the above equation can be
expressed equivalently as
\begin{eqnarray}\label{eqnreq1}
\text{span}(\underbrace{\left[\bar{\mathbf{H}}^{[k1]}~\bar{\mathbf{H}}^{[k2]}
\cdots \bar{\mathbf{H}}^{[k(k-1)]}~\bar{\mathbf{H}}^{[k(k+1)]}
\cdots \bar{\mathbf{H}}^{[k(R+1)]}\right]^{-1}
\barxH^{[kj]}}_{\mathbf{T}^{[kj]}}\barxV^{[2]}) \subset \notag \\
\text{span}
(\begin{tiny} \left[\begin{array}{ccccccc}\barxV^{[1]}&\mathbf{0}& \cdots &\mathbf{0}& \mathbf{0}&\cdots&\mathbf{0}\\ \mathbf{0}&\barxV^{[1]}&\cdots&\mathbf{0}&\mathbf{0}&\cdots&\mathbf{0}\\
\vdots& \vdots & \ddots &\vdots &\vdots &\ddots &\vdots\\
\mathbf{0}&\mathbf{0}& \cdots &\barxV^{[1]}& \cdots & \cdots & \mathbf{0}\\
\mathbf{0}&\mathbf{0}& \cdots & \cdots & \barxV^{[1]}& \cdots &\mathbf{0}\\
\vdots& \vdots & \ddots &\vdots &\vdots &\ddots &\vdots\\
\mathbf{0}& \mathbf{0} & \cdots &\mathbf{0} &\mathbf{0} &\cdots &
\barxV^{[1]}
\\ \end{array} \right]) \end{tiny}
\end{eqnarray}
Note that $\mathbf{T}^{[kj]}$ is a $R\mu_n \times \mu_n$ matrix and
can be written in a block matrix form:
\begin{equation}
\mathbf{T}^{[kj]}=\left[
\begin{array}{c}\mathbf{T}^{[kj]}_1\\\mathbf{T}^{[kj]}_2\\\vdots\\
\mathbf{T}^{[kj]}_R\end{array}\right] \label{T_block}
\end{equation}
where each block $\mathbf{T}^{[kj]}_i$ is a $\mu_n \times \mu_n$
matrix. Then, \eqref{eqnreq1} can be expressed equivalently as
\begin{equation*}
\text{span}(
\left[ \begin{array}{c} \mathbf{T}^{[kj]}_1 \barxV^{[2]}\\ \mathbf{T}^{[kj]}_2 \barxV^{[2]}\\ \vdots\\
\mathbf{T}^{[kj]}_R \barxV^{[2]} \end{array} \right] )\subset
\text{span}(
\left[ \begin{array}{cccc}\barxV^{[1]}&\mathbf{0}& \cdots &\mathbf{0}\\
\mathbf{0}&\barxV^{[1]}&\cdots&\mathbf{0}\\ \vdots& \vdots & \ddots
&\vdots\\ \mathbf{0}& \mathbf{0} & \cdots & \barxV^{[1]}
\end{array}\right])
\end{equation*}
The above condition can be satisfied if
\begin{equation}\label{c1}
\mathbf{T}^{[kj]}_i\bar{\mathbf{V}}^{[2]} \prec
\bar{\mathbf{V}}^{[1]}~\forall k=1,\ldots,R+1
~j=R+2,\ldots,K~i=1,\ldots,R
\end{equation}
where $\mathbf{P}  \prec  \mathbf{Q}$ means that the set of column
vectors of matrix $\mathbf{P}$ is a subset of the set of column
vectors of matrix $\mathbf{Q}$.

Then consider Receiver $k$, $\forall k=R+2,R+3,\cdots,K$. To get a
$Rn^{\Gamma}$ interference free dimension signal space, the
dimension of the signal space spanned by the interference vectors
cannot be more than $R(R+1)(n+1)^{\Gamma}-Rn^{\Gamma}$ at each
receiver. This can be achieved if all interference vectors from
Transmitter $j$, $\forall j=R+2,\cdots,k-1,k+1,\cdots,K$ and
$Rn^{\Gamma}$ interference vectors from Transmitter $R+1$ are
aligned within the signal space spanned by interference vectors from
transmitter $1,2,\cdots,R$. We first consider aligning the
interference from Transmitter $R+2,\cdots,k-1,k+1,\cdots,K$.
Mathematically, we choose the following alignments:
\begin{eqnarray*}
\text{span}(\barxH^{[kj]}\barxV^{[2]}) &\subset& \text{span}
(\left[\bar{\mathbf{H}}^{[k1]}\barxV^{[1]}
~\bar{\mathbf{H}}^{[k2]}\barxV^{[1]} \cdots
\bar{\mathbf{H}}^{[kR]}\barxV^{[1]}\right])\\
\Rightarrow \text{span}(\barxH^{[kj]}\barxV^{[2]}) &\subset&
\text{span} (\left[\bar{\mathbf{H}}^{[k1]} ~\bar{\mathbf{H}}^{[k2]}~
\cdots~ \bar{\mathbf{H}}^{[kR]}\right]
\left[ \begin{array}{cccc}\barxV^{[1]}&\mathbf{0}& \cdots &\mathbf{0}\\
\mathbf{0}&\barxV^{[1]}&\cdots&\mathbf{0}\\ \vdots& \vdots & \ddots
&\vdots\\ \mathbf{0}& \mathbf{0} & \cdots & \barxV^{[1]}
\end{array}\right])
\end{eqnarray*}
Notice that $[\bar{\mathbf{H}}^{[k1]}
~\bar{\mathbf{H}}^{[k2]}~\cdots~ \bar{\mathbf{H}}^{[kR]}]$ is a
$R\mu_n \times R\mu_n$ square matrix with full rank almost surely.
Thus, the above equation can be expressed equivalently as
\begin{eqnarray}\label{eqnreq2}
\text{span}(\underbrace{\left[\bar{\mathbf{H}}^{[k1]}
~\bar{\mathbf{H}}^{[k2]} \cdots \bar{\mathbf{H}}^{[kR]}\right]^{-1}
\barxH^{[kj]}}_{\mathbf{T}^{[kj]}}\barxV^{[2]}) \subset  \text{span}
(\left[\begin{array}{cccc}\barxV^{[1]}&\mathbf{0}& \cdots &\mathbf{0}\\
\mathbf{0}&\barxV^{[1]}&\cdots&\mathbf{0}\\ \vdots& \vdots & \ddots
&\vdots\\ \mathbf{0}& \mathbf{0} & \cdots & \barxV^{[1]}
\end{array} \right])
\end{eqnarray}
Note that $\mathbf{T}^{[kj]}$ is a $R\mu_n \times \mu_n$ matrix and
can be written in a block matrix form:
\begin{equation*}
\mathbf{T}^{[kj]}=\left[
\begin{array}{c}\mathbf{T}^{[kj]}_1\\\mathbf{T}^{[kj]}_2\\\vdots\\
\mathbf{T}^{[kj]}_R\end{array}\right] \label{T_block}
\end{equation*}
where each block $\mathbf{T}^{[kj]}_i$ is a $\mu_n \times \mu_n$
matrix. Then, \eqref{eqnreq2} can be expressed as
\begin{equation*}
\text{span}(
\left[ \begin{array}{c} \mathbf{T}^{[kj]}_1 \barxV^{[2]}\\ \mathbf{T}^{[kj]}_2 \barxV^{[2]}\\ \vdots\\
\mathbf{T}^{[kj]}_R \barxV^{[2]} \end{array} \right] )\subset
\text{span}(
\left[ \begin{array}{cccc}\barxV^{[1]}&\mathbf{0}& \cdots &\mathbf{0}\\
\mathbf{0}&\barxV^{[1]}&\cdots&\mathbf{0}\\ \vdots& \vdots & \ddots
&\vdots\\ \mathbf{0}& \mathbf{0} & \cdots & \barxV^{[1]}
\end{array}\right])
\end{equation*}
The above condition can be satisfied if
\begin{equation}
\mathbf{T}^{[kj]}_i\bar{\mathbf{V}}^{[2]} \prec
\bar{\mathbf{V}}^{[1]}
~k=R+2,R+3,\cdots,K~~j=R+2,\cdots,k-1,k+1,\cdots,K~i=1,\cdots,R
\label{c2}
\end{equation}
Now consider aligning $Rn^{\Gamma}$ interference vectors from
Transmitter $R+1$ at Receiver $k$, $\forall k=R+2,R+3,\cdots,K$.
This can be achieved if the space spanned by $Rn^{\Gamma}$ columns
of $\bar{\mathbf{H}}^{[k(R+1)]}\barxV^{[1]}$ is aligned within the
range of
$\left[\bar{\mathbf{H}}^{[k1]}\barxV^{[1]}~\cdots~\barxH^{[kR]}\barxV^{[1]}\right]$.
Since $\barxV^{[1]}$ is a $\mu_n \times R(n+1)^{\Gamma}$ matrix, we
can write it as $\barxV^{[1]}=[\barxV^{[1]}_u~
\barxV^{[1]}_{\epsilon_n}]$ where $\barxV^{[1]}_u$ and
$\barxV^{[1]}_{\epsilon_n}$ are $\mu_n \times Rn^{\Gamma}$ and
$\mu_n \times (R(n+1)^{\Gamma}-Rn^{\Gamma})$ matrices, respectively.
We assume the space spanned by the columns of
$\bar{\mathbf{H}}^{[k(R+1)]}\barxV^{[1]}_u$ is aligned within the
space spanned by the interference from Transmitter 1, 2, \ldots,
$R$. From equation \eqref{c1}, we have
\begin{equation*}
\mathbf{T}^{[1(R+2)]}_1 \bar{\mathbf{V}}^{[2]} \prec
\bar{\mathbf{V}}^{[1]}
\end{equation*}
This implies that $Rn^{\Gamma}$ columns of $\barxV^{[1]}$ are equal
to the columns of $\mathbf{T}^{[1(R+2)]}_R \bar{\mathbf{V}}^{[2]}$.
Without loss of generality, we assume that
$\barxV^{[1]}_u=\mathbf{T}^{[1(R+2)]}_1 \bar{\mathbf{V}}^{[2]}$.
Thus, to satisfy the interference alignment requirement, we choose
the following alignments:
\begin{eqnarray*}
\text{span}(\barxH^{[k(R+1)]}\bar{\mathbf{V}}^{[1]}_u)=\text{span}(\barxH^{[k(R+1)]}\mathbf{T}^{[1(R+2)]}_1
\bar{\mathbf{V}}^{[2]}) \subset \text{span}
(\left[\bar{\mathbf{H}}^{[k1]}\barxV^{[1]}
~\bar{\mathbf{H}}^{[k2]}\barxV^{[1]} \cdots
\bar{\mathbf{H}}^{[kR]}\barxV^{[1]}\right])\\
\Rightarrow \text{span}(\barxH^{[k(R+1)]}\mathbf{T}^{[1(R+2)]}_1
\bar{\mathbf{V}}^{[2]}) \subset \text{span}
(\left[\bar{\mathbf{H}}^{[k1]}~\bar{\mathbf{H}}^{[k2]}~\cdots~
\bar{\mathbf{H}}^{[kR]}\right]
\left[ \begin{array}{cccc}\barxV^{[1]}&\mathbf{0}& \cdots &\mathbf{0}\\
\mathbf{0}&\barxV^{[1]}&\cdots&\mathbf{0}\\ \vdots& \vdots & \ddots
&\vdots\\ \mathbf{0}& \mathbf{0} & \cdots & \barxV^{[1]}
\end{array}\right])\\
\Rightarrow \text{span}(\underbrace{\left[\bar{\mathbf{H}}^{[k1]}
~\bar{\mathbf{H}}^{[k2]}~\cdots~ \bar{\mathbf{H}}^{[kR]}\right]^{-1}
\barxH^{[k(R+1)]}\mathbf{T}^{[1(R+2)]}_1}_{\mathbf{T}^{[k(R+1)]}}\barxV^{[2]})
\subset \text{span}
(\left[\begin{array}{cccc}\barxV^{[1]}&\mathbf{0}& \cdots &\mathbf{0}\\
\mathbf{0}&\barxV^{[1]}&\cdots&\mathbf{0}\\ \vdots& \vdots & \ddots
&\vdots\\ \mathbf{0}& \mathbf{0} & \cdots & \barxV^{[1]}
\end{array} \right])
\end{eqnarray*}
Note that $\mathbf{T}^{[k(R+1)]}$ is a $R\mu_n \times \mu_n$ matrix
and can be written in a block matrix form:
\begin{equation*}
\mathbf{T}^{[k(R+1)]}=\left[
\begin{array}{c}\mathbf{T}^{[k(R+1)]}_1\\\mathbf{T}^{[k(R+1)]}_2\\\vdots\\
\mathbf{T}^{[k(R+1)]}_R\end{array}\right] \label{T_block}
\end{equation*}
where each block $\mathbf{T}^{[k(R+1)]}_i$ is a $\mu_n \times \mu_n$
matrix. Then, the above equation can be expressed as
\begin{equation*}
\text{span}(
\left[ \begin{array}{c} \mathbf{T}^{[k(R+1)]}_1 \barxV^{[2]}\\ \mathbf{T}^{[k(R+1)]}_2 \barxV^{[2]}\\ \vdots\\
\mathbf{T}^{[k(R+1)]}_R \barxV^{[2]} \end{array} \right] )\subset
\text{span}( \left[ \begin{array}{cccc}\barxV^{[1]}&\mathbf{0}& \cdots &\mathbf{0}\\
\mathbf{0}&\barxV^{[1]}&\cdots&\mathbf{0}\\ \vdots& \vdots & \ddots
&\vdots\\ \mathbf{0}& \mathbf{0} & \cdots & \barxV^{[1]}
\end{array}\right])
\end{equation*}
The above condition can be satisfied if
\begin{equation}
\mathbf{T}^{[k(R+1)]}_i\bar{\mathbf{V}}^{[2]} \prec
\bar{\mathbf{V}}^{[1]} \label{c3}~k=R+2,R+3,\cdots,K~i=1,\cdots,R
\end{equation}

Thus, interference alignment is ensured by choosing
$\bar{\mathbf{V}}^{[1]}$ and $\bar{\mathbf{V}}^{[2]}$ to satisfy
\eqref{c1}, \eqref{c2}, \eqref{c3}. Note that these conditions can
be expressed as
\begin{equation}
\mathbf{T}^{[kj]}_i\bar{\mathbf{V}}^{[2]} \prec
\bar{\mathbf{V}}^{[1]}~~ \forall (k,j)\in A ~i=1,2,\cdots,R
\end{equation}
where $A=\{(k,j):(k,j)\in \{1,2,\cdots,R+1\}\times\{R+2,\cdots,K\}\}
\cup\{(k,j):(k,j)\in\{R+2,\cdots,K\}\times\{R+1,\cdots,K\},~k \neq
j\}$. Therefore, there are $KR(K-R-1)$ such equations. We need to
choose $R(n+1)^{\Gamma}$ column vectors for $\barxV^{[1]}$ and
$Rn^{\Gamma}$ column vectors for $\barxV^{[2]}$. Let $\mathbf{w}$ be
a $\mu_n \times 1$ column vector $\mathbf{w} = [1 \ 1 \ \ldots \
1]^T$.  The sets of column vectors of $\barxV^{[1]}$ and
$\barxV^{[2]}$ are chosen to be equal to the sets $\bar{V}^{[1]}$
and $\bar{V}^{[2]}$ respectively where
\begin{equation}
\bar{V}^{[1]} = \bigcup_{m=0}^{R-1} \big\{ \big(\prod_{i=1,\cdots,R,
(k,j)\in A} (\mathbf{T}^{[kj]}_i)^{\alpha^{[kj]}_i}\big)\mathbf{w}:\
\alpha^{[kj]}_i \in \{mn+m+1, mn+m+2, \ldots, (m+1)n+m+1
\}\big\}\label{v1}
\end{equation}
\begin{equation}
\bar{V}^{[2]} = \bigcup_{m=0}^{R-1} \big\{ \big(\prod_{i=1,\cdots,R,
(k,j)\in A} (\mathbf{T}^{[kj]}_i)^{\alpha^{[kj]}_i}\big)\mathbf{w}:
\alpha_i^{[kj]} \in \{mn+m+1, mn+m+2, \ldots, (m+1)n+m
\}\big\}\label{v2}
\end{equation}
Note that the above construction requires the commutative property
of multiplication of matrices $\mathbf{T}^{[kj]}_i$. Therefore, it
requires $\mathbf{T}^{[kj]}_i$ to be diagonal matrices. Next, we
will show this is true. We illustrate this for the case when
$k=R+2,\cdots,K$ and $j=R+2,\cdots,k-1,k+1,\cdots,K$. Similar
arguments can be applied to other cases. Notice that
$[\bar{\mathbf{H}}^{[k1]}~ \bar{\mathbf{H}}^{[k2]}~\cdots
~\bar{\mathbf{H}}^{[kR]}]$ is a $R\mu_n \times R\mu_n$ square
matrix:
\begin{eqnarray*}
&\left[\begin{array}{cccc}&\bar{\mathbf{H}}^{[k1]}~
\bar{\mathbf{H}}^{[k2]} \cdots
\bar{\mathbf{H}}^{[kR]}\end{array}\right] =\\
&\begin{tiny}\left[ \begin{array}{ccccccccc}  \mathbf{h}^{[k1]}(\mu_n(t-1)+1) & \mathbf{0}_{R \times 1} & \ldots & \mathbf{0}_{R \times 1}&\cdots& \mathbf{h}^{[kR]}(\mu_n(t-1)+1) & \mathbf{0}_{R \times 1} & \ldots & \mathbf{0}_{R \times 1}\\
     \mathbf{0}_{R \times 1} & \mathbf{h}^{[k1]}(\mu_n(t-1)+2) & \ldots & \mathbf{0}_{R \times 1}&\cdots&\mathbf{0}_{R \times 1} & \mathbf{h}^{[kR]}(\mu_n(t-1)+2) & \ldots & \mathbf{0}_{R \times 1}\\
    \vdots & \vdots & \ddots & \vdots &\cdots&\vdots & \vdots & \ddots & \vdots\\
     \mathbf{0}_{R \times 1} & \mathbf{0}_{R \times 1}& \cdots  &
     \mathbf{h}^{[k1]}(\mu_nt)&\cdots& \mathbf{0}_{R \times 1} & \mathbf{0}_{R \times 1}& \cdots  & \mathbf{h}^{[kR]}(\mu_nt)
    \end{array}\right]\end{tiny}
\end{eqnarray*}
Then,
\begin{eqnarray*}
[\bar{\mathbf{H}}^{[k1]}~
\bar{\mathbf{H}}^{[k2]}~\cdots ~\bar{\mathbf{H}}^{[kR]}]^{-1}=\begin{tiny}\left[ \begin{array}{cccc}\mathbf{u}^{[k1]}(\mu_n(t-1)+1)_{1 \times R} & \mathbf{0}_{1 \times R} & \cdots & \mathbf{0}_{1 \times R}\\
      \mathbf{0}_{1 \times R}& \mathbf{u}^{[k1]}(\mu_n(t-1)+2)_{1 \times R} & \cdots & \mathbf{0}_{1 \times R}\\
    \vdots & \vdots & \ddots & \vdots \\
    \mathbf{0}_{1 \times R} & \mathbf{0}_{1 \times R}& \cdots &\mathbf{u}^{[k1]}(\mu_n(t-1)+\mu_n)_{1 \times R}\\ \mathbf{u}^{[k2]}(\mu_n(t-1)+1)_{1 \times R} & \mathbf{0}_{1 \times R} & \cdots & \mathbf{0}_{1 \times R}\\
      \mathbf{0}_{1 \times R}& \mathbf{u}^{[k2]}(\mu_n(t-1)+2)_{1 \times R} & \cdots & \mathbf{0}_{1 \times R}\\
    \vdots & \vdots & \ddots & \vdots \\
    \mathbf{0}_{1 \times R} & \mathbf{0}_{1 \times R}& \cdots &\mathbf{u}^{[k2]}(\mu_n(t-1)+\mu_n)_{1 \times R}\\ \vdots & \vdots & \vdots & \vdots\\
    \mathbf{u}^{[kR]}(\mu_n(t-1)+1)_{1 \times R} & \mathbf{0}_{1 \times R} & \cdots & \mathbf{0}_{1 \times R}\\
      \mathbf{0}_{1 \times R}& \mathbf{u}^{[kR]}(\mu_n(t-1)+2)_{1 \times R} & \cdots & \mathbf{0}_{1 \times R}\\
    \vdots & \vdots & \ddots & \vdots \\
    \mathbf{0}_{1 \times R} & \mathbf{0}_{1 \times R}& \cdots &\mathbf{u}^{[kR]}(\mu_n(t-1)+\mu_n)_{1 \times R}
    \end{array}\right]\end{tiny}
\end{eqnarray*}
where $\mathbf{u}^{[kj]}(\mu_n(t-1)+\kappa), \forall
\kappa=1,2,\ldots,\mu_n$ is a $1 \times R$ row vector and
\begin{eqnarray}
\begin{tiny}\left[ \begin{array}{cccc}\mathbf{h}^{[k1]}(\mu_n(t-1)+\kappa)& \mathbf{h}^{[k2]}(\mu_n(t-1)+\kappa) &
\cdots & \mathbf{h}^{[kR]}(\mu_n(t-1)+\kappa)\end{array} \right]^{-1} = \left[ \begin{array}{c}\mathbf{u}^{[k1]}(\mu_n(t-1)+\kappa)\\\mathbf{u}^{[k2]}(\mu_n(t-1)+\kappa) \notag\\
\vdots\\\mathbf{u}^{[kR]}(\mu_n(t-1)+\kappa)\end{array}\right]~~~~\kappa=1,2,\ldots,\mu_n.\end{tiny}
\end{eqnarray}
Recall
\begin{eqnarray*}
\mathbf{T}^{[kj]}=\left[
\begin{array}{c}\mathbf{T}^{[kj]}_1\\ \mathbf{T}^{[kj]}_2\\\vdots\\
\mathbf{T}^{[kj]}_R\end{array}\right]=[\bar{\mathbf{H}}^{[k1]}~
\bar{\mathbf{H}}^{[k2]}~\cdots
~\bar{\mathbf{H}}^{[kR]}]^{-1}\bar{\mathbf{H}}^{[kj]}~~
  \bar{\mathbf{H}}^{[kj]}(t) =  \begin{tiny}\left[ \begin{array}{cccc}  \mathbf{h}^{[kj]}(\mu_n(t-1)+1) & \mathbf{0} & \ldots & \mathbf{0}\\
     \mathbf{0} & \mathbf{h}^{[kj]}(\mu_n(t-1)+2) & \ldots & \mathbf{0}\\
    \vdots & \vdots & \ddots & \vdots\\
     \mathbf{0} & \mathbf{0}& \cdots  & \mathbf{h}^{[kj]}(\mu_nt)
    \end{array}\right]\end{tiny}
\end{eqnarray*}
Thus, $\forall i=1,2,\cdots,R$
\begin{equation}\label{diagonal}
\mathbf{T}^{[kj]}_i=\begin{tiny}\left[ \begin{array}{cccc}\mathbf{u}^{[ki]}(\mu_n(t-1)+1)\mathbf{h}^{[kj]}(\mu_n(t-1)+1)& 0 & \cdots & 0\\
     0 & \mathbf{u}^{[ki]}(\mu_n(t-1)+2)\mathbf{h}^{[kj]}(\mu_n(t-1)+2) & \cdots & 0\\
    \vdots & \vdots & \ddots & \vdots \\
    0 & 0 & \cdots
    &\mathbf{u}^{[ki]}(\mu_n t)\mathbf{h}^{[kj]}(\mu_n t)\end{array}\right]\end{tiny}
\end{equation}
Hence, $\mathbf{T}^{[kj]}_i$ are diagonal matrices with diagonal
entries
$\mathbf{u}^{[ki]}(\mu_n(t-1)+\kappa)\mathbf{h}^{[kj]}(\mu_n(t-1)+\kappa)$,
$\forall \kappa=1,\ldots,\mu_n $.

Through interference alignment, we ensure that the dimension of the
interference is small enough. Now we need to verify that the desired
signal vectors are linearly independent of the interference vectors
so that each receiver can separate the signal and interference
signals. Consider Receiver 1. Since all interference vectors are
aligned in the signal space spanned by interference from transmitter
$2,3\cdots,R+1$, it suffices to verify that columns of
$\bar{\mathbf{H}}^{[11]}\barxV^{[1]}$ are linearly independent of
columns of  $[\bar{\mathbf{H}}^{[12]}\barxV^{[1]} \cdots
\bar{\mathbf{H}}^{[1(R+1)]}\barxV^{[1]}]$ almost surely. Notice that
the direct channel matrix $\barxH^{[11]}$ does not appear in the
interference alignment equations and $\barxV^{[1]}$ is chosen
independently of $\barxH^{[11]}$. Then, the desired signal
$\barxV^{[1]}$ undergoes an independent linear transformation by
multiplying $\barxH^{[11]}$. Thus, columns of
$\bar{\mathbf{H}}^{[11]}\barxV^{[1]}$ are linearly independent of
columns of $[\bar{\mathbf{H}}^{[12]}\barxV^{[1]} \cdots
\bar{\mathbf{H}}^{[1(R+1)]}\barxV^{[1]}]$ almost surely as long as
all entries of $\barxV^{[1]}$ are not equal to zero with probability
one. If there are some entries of $\barxV^{[1]}$ are equal to zero,
then due to the block diagonal structure of $\barxH^{[11]}$ the
desired signal vectors are linearly dependent of the interference
vectors. For example, consider three $3 \times 3$ diagonal matrix
$\mathbf{H}^{[1]}$, $\mathbf{H}^{[2]}$, $\mathbf{H}^{[3]}$ whose
entries are drawn according to a continuous distribution.
$\mathbf{v}$ is a $3 \times 1$ vector whose entries depend on
entries of $\mathbf{H}^{[2]}$, $\mathbf{H}^{[3]}$ and are non-zero
with probability one. Vectors $\mathbf{H}^{[2]}\mathbf{v}$ and
$\mathbf{H}^{[3]}\mathbf{v}$ span a plane in the three dimensional
space. Now vector $\mathbf{v}$ undergoes a random linear
transformation by multiplying $\mathbf{H}^{[1]}$. The probability
that vector $\mathbf{H}^{[1]}\mathbf{v}$ lies in that plan is zero.
If $\mathbf{v}$ has one zero entry, for example
$\mathbf{v}=[1~1~0]^T$, then $\mathbf{H}^{[1]}\mathbf{v},
\mathbf{H}^{[2]}\mathbf{v}$ and $\mathbf{H}^{[3]}\mathbf{v}$ are two
dimensional vectors in the three dimensional vector space. Hence
they are linearly dependent. Next we will verify all entries of
$\barxV^{[1]}$ and $\barxV^{[2]}$are nonzero with probability one
through their construction from $\eqref{v1}$ and $\eqref{v2}$. From
\eqref{v1}, \eqref{v2} and \eqref{diagonal}, it can be seen that
each entry of $\barxV^{[1]}$ and $\barxV^{[2]}$ is a product of the
power of some
$\mathbf{u}^{[ki]}(\mu_n(t-1)+\kappa)\mathbf{h}^{[kj]}(\mu_n(t-1)+\kappa)$.
To verify each entry of $\barxV^{[1]}$ and $\barxV^{[2]}$ is not
equal to zero with probability one, we only need to verify
$\mathbf{u}^{[ki]}(\mu_n(t-1)+\kappa)\mathbf{h}^{[kj]}(\mu_n(t-1)+\kappa)$
is not equal to zero with probability one. Since each entry of
$\mathbf{h}^{[kj]}(\mu_n(t-1)+\kappa)$ is drawn from a continuous
distribution,
$\mathbf{u}^{[ki]}(\mu_n(t-1)+\kappa)\mathbf{h}^{[kj]}(\mu_n(t-1)+\kappa)=0$
if and only if all entries of $\mathbf{u}^{[ki]}(\mu_n(t-1)+\kappa)$
are equal to zero. However, $\mathbf{u}^{[ki]}(\mu_n(t-1)+\kappa)$
is a row of the inverse of the $R\times R$ square matrix. Thus, not
all entries of $\mathbf{u}^{[ki]}(\mu_n(t-1)+\kappa)$ are equal to
zero with probability one. As a result, all entries of
$\barxV^{[1]}$ and $\barxV^{[2]}$ are not equal to zero with
probability one. To this end, we conclude that at Receiver 1 the
desired signal vectors are linearly independent with the
interference signal vectors.

Similar arguments can be applied at Receiver $2,3,\ldots,K$ to show
that the desired signal vectors are linearly independent of the
interference vectors. Thus, each receiver can decode its desired
streams using zero forcing. As a result, each user can achieve
$\frac{R}{R+1}$ degrees of freedom per channel use for a total of
$\frac{R}{R+1}K$ degrees of freedom with probability one.
\end{proof}

\section{The achievable degrees of freedom of the MIMO Gaussian Interference channel with constant channel
coefficients}\label{apdx:mimo} In this appendix, we consider the
achievable degrees of freedom for some MIMO Gaussian interference
channels with constant channel coefficients. Specifically, we
consider the $R+2$ user MIMO Gaussian interference channel where
each transmitter has $M>1$ antennas and receiver has $RM$,
$R=2,3,\cdots$ antennas. The main results of this section are
presented in the following theorems:
\begin{theorem}\label{theorem:cwot}
For the $R+2$ user MIMO Gaussian interference channel where each
transmitter has $M>1$ antennas and each receiver has $RM$,
$R=2,3,\cdots$, antennas with constant channel coefficients,
$RM+\lfloor\frac{RM}{R^2+2R-1}\rfloor$ degrees of freedom can be
achieved without channel extension.
\end{theorem}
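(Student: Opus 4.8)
The plan is to generalize the beamforming construction of Example~1 (the $R=2,M=4$ case) to arbitrary $R\ge 2$. We target a total of $RM+\Delta$ degrees of freedom with $\Delta=\lfloor\frac{RM}{R^{2}+2R-1}\rfloor$, assigning $d_{i}$ streams to user $i$ so that $\sum_{i=1}^{R+2}d_{i}=RM+\Delta$ (the $\Delta$ surplus streams may be given entirely to user $R+2$, exactly as in Example~1). Transmitter $i$ sends $\xX^{[i]}=\xV^{[i]}\xX^{i}$ through an $M\times d_{i}$ beamforming matrix $\xV^{[i]}$, and each receiver decodes by zero forcing the interference. Since Receiver~$k$ has an $RM$-dimensional observation, carries $d_{k}$ desired streams, and sees a total of $\sum_{i\neq k}d_{i}=RM+\Delta-d_{k}$ interference columns from the other $R+1$ transmitters, decodability requires the interference to collapse into a subspace of dimension at most $RM-d_{k}$; hence at every receiver we must create exactly $\Delta$ alignment collisions, a number that does not depend on $d_{k}$.

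Each collision is engineered as in Example~1. To drive a single interference vector $\xH^{[ki]}\xv$ into the span of the interference produced by the $R$ remaining interferers $\{1,\dots,R+2\}\setminus\{k,i\}$ at Receiver~$k$, I left-multiply the containment by the inverse of $[\xH^{[kj_{1}]}~\cdots~\xH^{[kj_{R}]}]$, which is $RM\times RM$ and full rank almost surely, obtaining the equivalent block conditions $\mathbf{T}^{[ki]}_{l}\xv\in\text{span}(\xV^{[j_{l}]})$, $l=1,\dots,R$, where $\mathbf{T}^{[ki]}=[\xH^{[kj_{1}]}~\cdots~\xH^{[kj_{R}]}]^{-1}\xH^{[ki]}$ is $RM\times M$ with $R$ blocks $\mathbf{T}^{[ki]}_{l}$ of size $M\times M$. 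The design then fixes a consistent assignment specifying, for each receiver, which $\Delta$ interference vectors to align and, for each, which column of each of the $R$ recipient transmitters is forced equal to the corresponding $\mathbf{T}^{[ki]}_{l}\xv$. A vector that must be aligned at two receivers simultaneously is pinned down as a generic eigenvector of an appropriate product $(\mathbf{T}^{[k'i]}_{l'})^{-1}\mathbf{T}^{[ki]}_{l}$, just as $\xv^{[4]}_{1}$ is forced in Example~1 to be an eigenvector of $(\mathbf{T}^{[2]}_{2})^{-1}\mathbf{T}^{[1]}_{2}$; once these seed eigenvectors are chosen, the remaining alignment equations form a cascade of linear identifications that determine every constrained beamforming column, and the leftover columns, placed at user $R+2$, are drawn randomly from a continuous distribution.

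The quantity $R^{2}+2R-1$ is precisely what makes this accounting close. The $\Delta$ collisions at each of the $R+2$ receivers consume $(R+2)R\Delta$ recipient column-slots; a column can be shared between two receivers only through an eigenvector identification, which in turn costs one seed eigenvector, so if $N_{f}$ is the number of leftover columns a short count gives $N_{f}=(RM+\Delta)-(R+2)R\Delta=RM-(R^{2}+2R-1)\Delta$, which must be nonnegative; hence $\Delta=\lfloor\frac{RM}{R^{2}+2R-1}\rfloor$ is the best this scheme can do, with $N_{f}=RM\bmod(R^{2}+2R-1)$ columns filled generically (in Example~1, $N_{f}=8-7=1$, realized by $\xv^{[4]}_{3}$). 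Decodability is then immediate: no direct-link matrix $\xH^{[kk]}$ appears in any alignment equation, so $\xH^{[kk]}\xV^{[k]}$ is the image of vectors chosen independently of $\xH^{[kk]}$ under a generic linear map, and since the channel coefficients are constant (no block-diagonal structure to create accidental degeneracies), with probability one these $d_{k}$ vectors are linearly independent of the at most $(RM-d_{k})$-dimensional interference subspace. Hence each receiver zero-forces interference and recovers $d_{k}$ streams, for a total of $RM+\Delta$ degrees of freedom.

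The main obstacle is the combinatorial design underlying the second step: exhibiting, for arbitrary $R$ and $M$, an explicit and consistent alignment assignment — which $\Delta$ vectors to align at each of the $R+2$ receivers, and which pairs of these to identify across receivers via eigenvectors — so that the induced chains of forced equalities are solvable, the eigenvector equations are feasible (each relevant product of $\mathbf{T}$-blocks must have enough linearly independent eigenvectors), and no transmitter's column budget $d_{i}$ is exceeded, all while attaining the count $\Delta=\lfloor\frac{RM}{R^{2}+2R-1}\rfloor$. Once such a design is in hand, the invertibility of the stacked interference matrices, the eigenvector feasibility, and the generic linear independence of desired and interference vectors are all routine consequences of the channel coefficients being drawn from a continuous distribution.
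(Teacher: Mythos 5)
Your overall strategy is exactly the paper's: generalize Example~1, force $\Delta=\lfloor\frac{RM}{R^2+2R-1}\rfloor$ alignment collisions at every receiver via the block conditions $\mathbf{T}^{[ki]}_l\xv\in\text{span}(\xV^{[j_l]})$, resolve the one unavoidable circular constraint by an eigenvector equation, fill the leftover columns generically, and get decodability for free because no direct-link matrix enters the alignment equations. Your budget count $N_f=(RM+\Delta)-(R+2)R\Delta=RM-(R^2+2R-1)\Delta$ is also the correct accounting and correctly explains where $R^2+2R-1$ comes from.

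The genuine gap is the one you name yourself in your last paragraph: you never exhibit the consistent alignment assignment, and that assignment \emph{is} the proof --- without it there is no guarantee that the forced equalities are acyclic (solvable by cascade), that each transmitter's column budget suffices, or that only eigenvector-feasible identifications arise. The paper closes this by a concrete design: partition each $\xV^{[i]}=[\xV^{[i]}_1~\cdots~\xV^{[i]}_R~\xV^{[i]}_{R+1}]$ into $R$ recipient blocks of width $\Delta$ plus a free block of width $d_i-R\Delta$ (whence the constraint $d_i\geq R\Delta$, which you omit); align the single block $\xV^{[R+2]}_1$ of Transmitter $R+2$ at \emph{both} Receivers $1$ and $2$, the block $\xV^{[j-1]}_1$ of Transmitter $j-1$ at Receiver $j$ for $2<j\leq R+1$, and $\xV^{[1]}_1$ at Receiver $R+2$, with an explicit index map $n(i,j)$ routing each forced image into a distinct recipient block. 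This routing is arranged so that exactly one pair of constraints collides --- Receivers $1$ and $2$ both send $\xV^{[R+2]}_1$ into the span of $\xV^{[R+1]}_1$ --- which is resolved by taking $\xV^{[R+2]}_1$ to be $\Delta$ eigenvectors of $(\mathbf{T}^{[2]}_R)^{-1}\mathbf{T}^{[1]}_R$; every other forced column then follows by direct substitution with no further cycles, and each transmitter ends up with exactly $R$ forced blocks (Transmitter $R+2$ has $R-1$ plus the seed block, Transmitter $R+1$ has $R+1$ slots of which two coincide). So your proposal identifies the right mechanism and the right arithmetic but leaves the load-bearing combinatorial construction as an acknowledged obstacle rather than supplying it.
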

\begin{proof}
The achievable scheme is provided in the following part.
\end{proof}
Theorem \ref{theorem:cwot} is interesting because it shows that when
$\lfloor\frac{RM}{R^2+2R-1}\rfloor>0$ and hence $M>R+2-\frac{1}{R}$,
using interference alignment scheme combined with zero forcing can
achieve more degrees of freedom than merely zero forcing. It also
shows that the $R+2$ user MIMO interference channel with $M$
antennas at each transmitter and $RM$ antennas at each receiver can
achieve more degrees of freedom than $R+1$ user with the same
antenna deployment when $M>R+2-\frac{1}{R}$. For example, if $R=2$,
Theorem \ref{theorem:cwot} shows that for the 4 user interference
channel with $M$ and $2M$ antennas at each transmitter and receiver
respectively, $2M+\lfloor\frac{2M}{7}\rfloor$ degrees of freedom can
be achieved using interference alignment. However, only $2M$ degrees
of freedom can be achieved using zero forcing. Thus, when $M>3$,
using interference alignment combined with zero forcing can achieve
more degrees of freedom than merely zero forcing. Similarly, only
$2M$ degrees of freedom can be achieved on the 3 user interference
channel with the same antenna deployment. Hence, when $M>3$ more
degrees of freedom can be achieved on the 4 user interference
channel. While Theorem~$\ref{theorem:cwot}$ indicates that when
$M<R+2$ using interference alignment combined with zero forcing may
not achieve more degrees of freedom than zero forcing without
channel extension, using interference alignment can achieve more
degrees of freedom if we allow channel extension. We present the
result in the following theorem:
\begin{theorem}\label{theorem:cex}
For the $R+2$ user MIMO interference channel where each transmitter
has $M$ $(1<M<R+2)$ antennas and each receiver has $RM$,
$R=2,3,\cdots$, antennas with constant channel coefficients,
$RM+\frac{1}{\lceil\frac{R+2}{M}\rceil}$ degrees of freedom per
orthogonal dimension can be achieved with
$\lceil\frac{R+2}{M}\rceil$ channel extension.
\end{theorem}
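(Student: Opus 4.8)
The plan is to reduce this constant-channel claim, by way of a channel extension, to the interference-alignment-plus-zero-forcing construction already illustrated in Examples~1 and~2, and then to cope with the block-diagonal structure that the extension introduces. Set $L=\lceil\frac{R+2}{M}\rceil$, so that $ML\geq R+2$, and pass to the $L$-symbol extension of the channel. Over the extended channel we have an $R+2$ user interference channel in which transmitter $i$ effectively has $ML$ antennas, receiver $k$ effectively has $RML$ antennas, and each effective channel matrix $\barxH^{[kj]}$ is block diagonal with $L$ diagonal blocks, each equal to the fixed $RM\times M$ matrix $\xh^{[kj]}$ (the blocks coincide because the coefficients are constant in time). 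It suffices to exhibit $RML+1$ degrees of freedom on this extended channel, since $\frac{RML+1}{L}=RM+\frac{1}{L}$ degrees of freedom per channel use then follow. Transmitter $i$ will send $d_i$ independently encoded streams along beamforming vectors $\barxv^{[i]}_1,\ldots,\barxv^{[i]}_{d_i}\in\mathbb{C}^{ML}$, with $\sum_{i=1}^{R+2}d_i=RML+1$ and $d_i\leq ML$, and every receiver will recover its streams by zero forcing; the role of the extension is precisely to supply the $ML\geq R+2$ transmit dimensions that the construction needs.

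The next step is to write down the alignment conditions. Under such a split, receiver $k$ sees $\sum_{j\neq k}d_j=RML+1-d_k$ interference vectors inside its $RML$-dimensional received space while needing a $d_k$-dimensional interference-free subspace, so the interference must be compressed by exactly one dimension at every receiver. Following Examples~1 and~2, I would achieve this by designating, at each receiver $k$, one interferer $p_k$ and forcing a single stream $\barxv^{[p_k]}_1$ of that transmitter into the span of the interference arriving from the remaining $R$ transmitters $\{j_1,\ldots,j_R\}=\{1,\ldots,R+2\}\setminus\{k,p_k\}$:
\begin{equation*}
\text{span}\big(\barxH^{[kp_k]}\barxv^{[p_k]}_1\big)\subset\text{span}\big([\,\barxH^{[kj_1]}\barxV^{[j_1]}~\cdots~\barxH^{[kj_R]}\barxV^{[j_R]}\,]\big).
\end{equation*}
The matrix $[\,\barxH^{[kj_1]}~\cdots~\barxH^{[kj_R]}\,]$ is $RML\times RML$ and, up to a reordering of its columns, block diagonal with each diagonal block equal to the generic $RM\times RM$ matrix $[\,\xh^{[kj_1]}~\cdots~\xh^{[kj_R]}\,]$, hence invertible with probability one; the condition above is therefore equivalent to requiring $\mathbf{T}^{[k]}_i\barxv^{[p_k]}_1$ to be a column of $\barxV^{[j_i]}$ for each $i=1,\ldots,R$, where $\mathbf{T}^{[k]}=[\,\barxH^{[kj_1]}~\cdots~\barxH^{[kj_R]}\,]^{-1}\barxH^{[kp_k]}$ is partitioned into $ML\times ML$ blocks $\mathbf{T}^{[k]}_1,\ldots,\mathbf{T}^{[k]}_R$.

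The third step is to choose the targets $p_k$, together with the split $\{d_i\}$, so that the resulting $R(R+2)$ vector equations are simultaneously solvable, and then to verify decodability. As in Examples~1 and~2, I would arrange the equations into a connected chain that pins down every beamforming vector starting from a single free vector $\barxv$ (in some configurations with one closure relation that forces $\barxv$ to be an eigenvector of a product of the blocks $\mathbf{T}^{[k]}_i$, as happens in Example~1; otherwise $\barxv$ is drawn at random), and any streams left unconstrained by the chain are drawn i.i.d.\ from a continuous distribution. For decodability one must then check that, at receiver $k$, the desired vectors $\barxH^{[kk]}\barxV^{[k]}$ are linearly independent of the interference, which after alignment is confined to an $(RML-d_k)$-dimensional subspace. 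The direct channels $\barxH^{[kk]}$ never enter the alignment equations, so $\barxV^{[k]}$ is hit by an independent random transformation; but---this is exactly the subtlety flagged in the proof of Theorem~\ref{thm:simo} in Appendix~\ref{apdx:simo}---because $\barxH^{[kk]}$ is block diagonal this yields genericity only if every entry of every $\barxV^{[k]}$ is nonzero with probability one, which one verifies from the explicit construction of the precoders along the lines of the argument there. Granting this, receiver $k$ zero-forces to obtain its $d_k$ interference-free streams, for a total of $RML+1$ degrees of freedom on the extension.

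I expect the main difficulty to lie in the third step: choosing the alignment targets $p_k$ and a split $\{d_i\}$ with $d_i\leq ML$ so that the $R(R+2)$ equations are mutually consistent, account for exactly $RML+1$ streams in total (chain-determined, free, possibly eigenvector-constrained, and random), and admit a common solution---and then carrying the genericity arguments (invertibility of the stacked channel matrices and nonvanishing of all entries of the precoders) through the block-diagonal structure. It is precisely this block structure that rules out a direct reuse of the unextended construction behind Theorem~\ref{theorem:cwot} and that makes the argument delicate.
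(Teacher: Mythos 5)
Your skeleton matches the paper's proof of Theorem~\ref{theorem:cex} almost exactly: pass to the $L=\lceil\frac{R+2}{M}\rceil$ extension so that $ML\geq R+2$, give transmitter $i$ $d_i$ streams with $\sum_i d_i=RML+1$, align exactly one interference vector per receiver into the span of the other $R$ interferers via the block-partitioned $\mathbf{T}^{[k]}$, seed the resulting chain with one vector, and argue decodability from the fact that the direct channels never enter the alignment equations together with the nonvanishing of the precoder entries. However, the step you defer as ``the main difficulty''---exhibiting a choice of targets $p_k$ and a split $\{d_i\}$ for which the $R(R+2)$ vector equations are simultaneously consistent---is precisely the content of the paper's proof, and without it the argument is a roadmap rather than a proof. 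The paper resolves it with the cyclic assignment $p_1=2$, $p_j=j+1$ for $2\leq j\leq R+1$, $p_{R+2}=1$, and an explicit index function $n(i,j)$ recording which slot of $\barxV^{[i]}$ each equation fills; the bookkeeping shows that the chain pins down $\barxv^{[i]}_1,\ldots,\barxv^{[i]}_R$ for every $i$ plus one extra vector $\barxv^{[2]}_{R+1}$, i.e.\ $R(R+2)+1$ vectors in all, which fits inside the stream budget because $ML\geq R+2$ gives $R(R+2)+1\leq RML+1$ and allows $d_i\geq R$ for all $i$ and $d_2\geq R+1$ with $d_i\leq ML$. Crucially, under this assignment the seed $\barxv^{[2]}_1$ is never the target of any equation, so the chain is acyclic and no eigenvector closure relation arises---your hedge on that point, imported from Example~1, turns out to be unnecessary here, and the seed can simply be drawn from a continuous distribution with no zero entries.

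Two smaller points. First, your claim that the span-containment condition is \emph{equivalent} to $\mathbf{T}^{[k]}_i\barxv^{[p_k]}_1$ being a column of $\barxV^{[j_i]}$ overstates it: the containment is equivalent to $\mathbf{T}^{[k]}_i\barxv^{[p_k]}_1\in\text{range}(\barxV^{[j_i]})$ for each $i$, and taking it to be an actual column is only a sufficient choice (which is all that is needed, and is what the paper does). Second, your description of $[\,\barxH^{[kj_1]}~\cdots~\barxH^{[kj_R]}\,]$ as block diagonal up to column reordering, with identical diagonal blocks, is right and is what makes its almost-sure invertibility reduce to that of the single $RM\times RM$ matrix $[\,\xh^{[kj_1]}~\cdots~\xh^{[kj_R]}\,]$; but you should then also note that, unlike in the proof of Theorem~\ref{thm:simo}, the nonvanishing of the precoder entries here is guaranteed simply by choosing the random seed and the unconstrained columns with no zero entries and observing that the finitely many linear maps in the chain preserve this property almost surely---no product-of-diagonal-matrices construction is needed.
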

\begin{proof}
The achievable scheme is provided in the following part.
\end{proof}

Theorem $\ref{theorem:cex}$ shows that if we allow channel
extension, $\frac{1}{\lceil\frac{R+2}{M}\rceil}$ more degrees of
freedom can be achieved using interference alignment combined with
zero forcing than merely zero forcing. For example, when $R=2, M=2$,
$\frac{1}{2}$ more degrees of freedom can be achieved using
interference alignment.

\subsection{Proof of Theorem \ref{theorem:cwot}}\label{achievability:t2}
When $\lfloor\frac{RM}{R^2+2R-1}\rfloor<0$ and hence $M
<R+2-\frac{1}{R}$, $RM$ degrees of freedom can be achieved by zero
forcing at each receiver. When $M \geq R+2$, we provide an
achievable scheme based on interference alignment to show that the
$i^{th}$ user can achieve $d_i$ degrees of freedom where
$R\lfloor\frac{R M}{R^2+2R-1}\rfloor \leq d_i \leq M$ and
$d_1+\cdots+d_{R+2}=RM+\lfloor\frac{RM}{R^2+2R-1}\rfloor$.

Transmitter $i$ sends message $W_i$ to Receiver $i$ using $d_i$
independently encoded streams along vectors $\mathbf{v}^{[i]}_m$,
i.e,
\begin{equation*}
\mathbf{X}^{[i]}=\sum_{m=1}^{d_i}x^{i}_m\mathbf{v}_m^{[i]}=\mathbf{V}^{[i]}\mathbf{X}^{i}~i=1,\cdots,R+2
\end{equation*}
Then, the received signal is
\begin{equation*}
\mathbf{Y}^{[j]}=\sum_{i=1}^{R+2}\mathbf{H}^{[ji]}\mathbf{V}^{[i]}\mathbf{X}^i+\mathbf{Z}^{[j]}.
\end{equation*}
In order for each receiver to decode its desired signal streams by
zero forcing the interference, the dimension of the interference has
to be less than or equal to $RM-d_i$. However, there are
$\lfloor\frac{RM}{R^2+2R-1}\rfloor+RM-d_i$ interference vectors at
Receiver $i$. Therefore, we need to align
$\lfloor\frac{RM}{R^2+2R-1}\rfloor$ interference signal vectors at
each receiver. This can be achieved if
$\lfloor\frac{RM}{R^2+2R-1}\rfloor$ interference vectors are aligned
within the space spanned by all other interference vectors. First,
we write $\xV^{[i]}$ in the block matrix form:
\begin{equation*}
\xV^{[i]}=[\xV^{[i]}_1~\xV^{[i]}_2~\cdots~\xV^{[i]}_R~\xV^{[i]}_{R+1}]
\end{equation*}
where $\xV^{[i]}_1,\cdots,\xV^{[i]}_R$ are $M \times
\lfloor\frac{RM}{R^2+2R-1}\rfloor$ dimensional matrices and
$\xV^{[i]}_{R+1}$ is an $M \times (d_i-R
\lfloor\frac{RM}{R^2+2R-1}\rfloor)$ dimensional matrix. At Receiver
1, we align the range of $\mathbf{H}^{[1(R+2)]}\mathbf{V}^{[R+2]}_1$
within the space spanned by other interference vectors:
\begin{eqnarray}
\text{span}(\mathbf{H}^{[1(R+2)]}\mathbf{V}^{[R+2]}_1) \subset
\text{span}(\left[
\mathbf{H}^{[12]}\mathbf{V}^{[2]}~\mathbf{H}^{[13]}\mathbf{V}^{[3]}~\cdots~\mathbf{H}^{[1(R+1)]}\mathbf{V}^{[R+1]}
\right])\notag \\
\Rightarrow
\text{span}(\underbrace{[\mathbf{H}^{[12]}~\mathbf{H}^{[13]}~\cdots~\mathbf{H}^{[1(R+1)]}]^{-1}\mathbf{H}^{[1(R+2)]}}_{\mathbf{T}^{[1]}}\mathbf{V}^{[R+2]}_1)
\subset \text{span}(\left[\begin{array}{cccc}\xV^{[2]}&\mathbf{0}& \cdots &\mathbf{0}\\
\mathbf{0}&\xV^{[3]}&\cdots&\mathbf{0}\\ \vdots& \vdots & \ddots
&\vdots\\ \mathbf{0}& \mathbf{0} & \cdots & \xV^{[R+1]}
\end{array} \right]) \label{eqnrx1}
\end{eqnarray}
Note that $\mathbf{T}^{[1]}$ is a $RM \times M$ matrix and can be
written in a block matrix form:
\begin{equation*}
\mathbf{T}^{[1]}=\left[
\begin{array}{c}\mathbf{T}^{[1]}_1 \\ \mathbf{T}^{[1]}_2 \\\vdots\\
\mathbf{T}^{[1]}_{R} \end{array}\right]
\end{equation*}
Then, condition \eqref{eqnrx1} can be expressed equivalently as
\begin{equation*}
\text{span}(\left[
\begin{array}{c}\mathbf{T}^{[1]}_1 \mathbf{V}^{[R+2]}_1)\\ \mathbf{T}^{[1]}_2 \mathbf{V}^{[R+2]}_1)\\\vdots\\
\mathbf{T}^{[1]}_{R} \mathbf{V}^{[R+2]}_1)\end{array}\right])
\subset \text{span}(\left[\begin{array}{cccc}\xV^{[2]}&\mathbf{0}& \cdots &\mathbf{0}\\
\mathbf{0}&\xV^{[3]}&\cdots&\mathbf{0}\\ \vdots& \vdots & \ddots
&\vdots\\ \mathbf{0}& \mathbf{0} & \cdots & \xV^{[R+1]}
\end{array} \right])
\end{equation*}
This condition can be satisfied if
\begin{eqnarray}
\mathbf{T}^{[1]}_1
\mathbf{V}^{[R+2]}_1&=& \xV^{[2]}_1 \notag\\ \mathbf{T}^{[1]}_2 \mathbf{V}^{[R+2]}_1 &=& \xV^{[3]}_1 \notag\\
& \vdots & \notag \\ \mathbf{T}^{[1]}_{R-1}
\mathbf{V}^{[R+2]}_1 &=& \xV^{[R]}_1 \notag\\
\text{span}(\mathbf{T}^{[1]}_{R}
\mathbf{V}^{[R+2]}_1)&=&\text{span}(\xV^{[R+1]}_1)\label{span1}
\end{eqnarray}
At Receiver 2, we align the range of
$\mathbf{H}^{[2(R+2)]}\mathbf{V}^{[R+2]}_1$ within the space spanned
by other interference vectors:
\begin{equation*}
\text{span}(\mathbf{H}^{[2(R+2)]}\mathbf{V}^{[R+2]}_1) \subset
\text{span}(\left[
\mathbf{H}^{[21]}\mathbf{V}^{[1]}~\mathbf{H}^{[23]}\mathbf{V}^{[3]}~\cdots~\mathbf{H}^{[2(R+1)]}\mathbf{V}^{[R+1]} \right])\\
\end{equation*}
By similar arguments used at Receiver 1, this condition can be
satisfied if
\begin{eqnarray}
\mathbf{T}^{[2]}_1
\mathbf{V}^{[R+2]}_1&=& \xV^{[1]}_1  \notag\\ \mathbf{T}^{[2]}_2 \mathbf{V}^{[R+2]}_1 &=& \xV^{[3]}_2 \notag\\
& \vdots & \notag \\ \mathbf{T}^{[2]}_{R-1}
\mathbf{V}^{[R+2]}_1 &=& \xV^{[R]}_2 \notag\\
\text{span}(\mathbf{T}^{[2]}_{R}
\mathbf{V}^{[R+2]}_1)&=&\text{span}(\xV^{[R+1]}_1)\label{span2}
\end{eqnarray}
where
\begin{equation*}
\mathbf{T}^{[2]}= \left[
\begin{array}{c}\mathbf{T}^{[2]}_1 \\ \mathbf{T}^{[2]}_2 \\ \vdots\\
\mathbf{T}^{[2]}_{R} \end{array}\right]=
[\mathbf{H}^{[21]}~\mathbf{H}^{[23]}~\cdots~\mathbf{H}^{[2(R+1)]}]^{-1}\mathbf{H}^{[2(R+2)]}
\end{equation*}
At Receiver $j$, $\forall j,  2<j\leq R+1$, we align the range of
$\mathbf{H}^{[j(j-1)]}\mathbf{V}^{[j-1]}_1$ within the space spanned
by other interference vectors:
\begin{equation*}
\text{span}(\mathbf{H}^{[j(j-1)]}\mathbf{V}^{[j-1]}_1) \subset
\text{span}(\left[
\barxH^{[j1]}\xV^{[1]}~\cdots~\barxH^{[j(j-2)]}\xV^{[j-2]}~\barxH^{[j(j+1)]}\xV^{[j+1]}~\cdots~\barxH^{[ji]}\xV^{[i]}~\cdots~\barxH^{[j(R+2)]}\xV^{[R+2]}
\right])
\end{equation*}
By similar arguments used at Receiver 1, this condition can be
satisfied if
\begin{eqnarray*}
\mathbf{T}^{[j]}\xV^{[j-1]}_1=\left[\begin{array}{c}\xV^{[1]}_{n(1,j)}\\
\vdots\\ \xV^{[j-2]}_{n(j-2,j)}\\ \xV^{[j+1]}_{n(j+1,j)}\\ \vdots\\
\xV^{[i]}_{n(i,j)}\\ \vdots\\ \xV^{[R+2]}_{n(R+2,j)}
\end{array}\right]
\end{eqnarray*}
where
\begin{equation*}
\mathbf{T}^{[j]}=[\mathbf{H}^{[j1]}~\cdots~\mathbf{H}^{[j(j-2)]}~\mathbf{H}^{[j(j+1)]}~\cdots~\mathbf{H}^{[j(R+2)]}]^{-1}\mathbf{H}^{[j(j-1)]}~~~
n(i,j)=\left\{\begin{array}{ccc}j-1&~&i=1,R+1,R+2,i \neq
j\\j-2&~&1<i<R+1,j>i+1\\j&~&3<i<R+1,j<i\end{array} \right.
\end{equation*}
At Receiver $R+2$, we align the range of
$\mathbf{H}^{[(R+2)1]}\mathbf{V}^{[1]}_1$ within the space spanned
by other interference vectors:
\begin{equation*}
\text{span}(\mathbf{H}^{[(R+2)1]}\mathbf{V}^{[1]}_1) \subset
\text{span}(\left[ \mathbf{H}^{[(R+2)2]}\mathbf{V}^{[2]}~\mathbf{H}^{[(R+2)3]}\mathbf{V}^{[3]}~\cdots~\mathbf{H}^{[(R+2)(R+1)]}\mathbf{V}^{[R+1]}\right])\\
\end{equation*}
This condition can be satisfied if
\begin{eqnarray*}
\mathbf{T}^{[R+2]}\xV^{[1]}_1=\left[\begin{array}{c}\xV^{[2]}_R\\
\xV^{[3]}_R\\ \vdots\\ \xV^{[R+1]}_R \end{array}\right]
\end{eqnarray*}
where
\begin{equation*}
\mathbf{T}^{[R+2]}=[\mathbf{H}^{[(R+2)2]}~\mathbf{H}^{[(R+2)3]}~\cdots~\mathbf{H}^{[(R+2)(R+1)]}]^{-1}\mathbf{H}^{[(R+2)1]}
\end{equation*}
Notice once $\xV^{[R+2]}_1$ is chosen, all other vectors can be
solved from the above equations. To solve $\xV^{[R+2]}_1$, from
\eqref{span1}, \eqref{span2}, we have
\begin{eqnarray*}
\text{span}
(\mathbf{T}^{[1]}_{R}\mathbf{V}^{[R+2]}_1)&=&\text{span}(\mathbf{T}^{[2]}_{R}\mathbf{V}^{[R+2]}_1)\label{eigen}\\
\Rightarrow \text{span}
((\mathbf{T}^{[2]}_{R})^{-1}\mathbf{T}^{[1]}_{R}\mathbf{V}^{[R+2]}_1)&=&\text{span}(\mathbf{V}^{[R+2]}_1)
\end{eqnarray*}
Hence, columns of $\xV^{[R+2]}_1$ can be chosen as
\begin{equation}
\mathbf{V}^{[R+2]}_{1}=[\mathbf{e}_1~\cdots~\mathbf{e}_{\lfloor\frac{RM}{R^2+2R-1}\rfloor}]
\end{equation}
where
$\mathbf{e}_1~\cdots~\mathbf{e}_{\lfloor\frac{RM}{R^2+2R-1}\rfloor}$
are the $\lfloor\frac{RM}{R^2+2R-1}\rfloor$ eigenvectors of
$(\mathbf{T}^{[2]}_{R})^{-1}\mathbf{T}^{[1]}_{R}$. Note that the
above construction only specifies $\mathbf{V}^{[i]}_1,
\mathbf{V}^{[i]}_2,\ldots, \mathbf{V}^{[i]}_R$. The remaining
vectors of $\mathbf{V}^{[i]}_{R+1}$ can be chosen randomly according
to a continuous distribution.

Through interference alignment, we ensure that the interference
vectors span a small enough signal space. We need to verify that the
desired signal vectors, i.e., $\mathbf{H}^{[ii]}\mathbf{V}^{[i]}$
are linearly independent of interference vectors so that each
receiver can decode its message using zero forcing. Notice that the
direct channel matrices $\mathbf{H}^{[ii]}, i=1,\ldots,R+2$ do not
appear in the interference alignment equations, $\mathbf{V}^{[i]}$
undergoes an independent linear transformation by multiplying
$\mathbf{H}^{[ii]}$. Therefore, the desired signal vectors are
linearly independent of the interference signals with probability
one. As a result, user $i$ can achieve $d_i$ degrees of freedom for
a total of $RM+\lfloor\frac{RM}{R^2+2R-1}\rfloor$ degrees of
freedom.

\subsection{Proof of Theorem \ref{theorem:cex}}\label{achievability:t3}
We will provide an achievable scheme based on interference alignment
to show in the $\lceil\frac{R+2}{M}\rceil$ symbol extension channel,
user $i, \forall i=1,3,\ldots,R+2$ can achieve $d_i$ $(R \leq
d_i\leq \lceil\frac{R+2}{M}\rceil M)$ degrees of freedom and user 2
can achieve $d_2$ $(R+1 \leq d_2\leq \lceil\frac{R+2}{M}\rceil M)$
degrees of freedom for a total of $RM\lceil\frac{R+2}{M}\rceil+1$
degrees of freedom. Hence, $RM+\frac{1}{\lceil\frac{R+2}{M}\rceil}$
degrees of freedom can be achieved on the original channel. Over the
extension channel, the channel input-output relationship is
\begin{equation*}
\barxY^{[j]}= \sum_{i=1}^{R+2}\barxH^{[ji]}\barxX^{[i]}+\barxZ^{[j]}
\end{equation*}
where the overbar notation represents the
$\lceil\frac{R+2}{M}\rceil$-symbol extensions so that
\begin{equation*}
\mathbf{\bar{X}}\triangleq\left[\begin{array}{c}\mathbf{X}(\lceil\frac{R+2}{M}\rceil t)\\
\vdots
\\ \mathbf{X}(\lceil\frac{R+2}{M}\rceil
(t+1)-1)\end{array}\right]\quad \mathbf{\bar{Z}}\triangleq\left[\begin{array}{c}\mathbf{Z}(\lceil\frac{R+2}{M}\rceil t)\\
\vdots
\\ \mathbf{Z}(\lceil\frac{R+2}{M}\rceil
(t+1)-1)\end{array}\right]
\end{equation*}
where $\mathbf{X}$ and $\mathbf{Z}$ are $M \times 1$ and $RM \times
1$ vectors respectively, and
\begin{equation*}
\barxH \triangleq \left[\begin{array}{cccc}\xH & \mathbf{0}& \cdots & \mathbf{0}\\
\mathbf{0} & \xH & \cdots& \mathbf{0}\\ \vdots & \vdots & \ddots &
\vdots\\ \mathbf{0} & \mathbf{0} & \cdots & \xH
\end{array}\right].
\end{equation*}
where $\xH$ is the $RM \times M$ channel matrix.

In the extension channel, Transmitter $i$ sends message $W_i$ to
Receiver $i$ using $d_i$ independently encoded streams along vectors
$\barxv^{[i]}_1, \cdots, \barxv^{[i]}_{d_i}$, i.e,
\begin{equation*}
\barxX^{[i]}=\sum_{m=1}^{d_i}\barxv^{[i]}_{m}x^{[i]}_m=\barxV^{[i]}\xX^{[i]}
\end{equation*}
where $\barxV^{[i]}$ and $\xX^{[i]}$ are $M
\lceil\frac{R+2}{M}\rceil \times d_i$ and $d_i \times 1$ matrices
respectively. In order for each receiver to decode its desired
signal streams by zero forcing the interference, the dimension of
the space spanned by the interference vectors has to be less than or
equal to $RM\lceil\frac{R+2}{M}\rceil-d_i$. However, there are
$RM\lceil\frac{R+2}{M}\rceil-d_i+1$ interference vectors at Receiver
$i$. Therefore, we need to align 1 interference signal vector at
each receiver. This can be achieved if one interference vector is
aligned within the space spanned by all other interference vectors.
Mathematically, we choose the
following interference alignment equations:\\
At Receiver 1:
\begin{eqnarray*}
\text{span}(\barxH^{[12]}\barxv^{[2]}_1) \subset
\text{span}(\left[ \barxH^{[13]}\barxV^{[3]}~\barxH^{[14]}\barxV^{[4]}~\cdots~\barxH^{[1(R+1)]}\barxV^{[R+1]}\right])\\
\Rightarrow
\text{span}(\underbrace{[\barxH^{[13]}~\barxH^{[14]}~\cdots~\barxH^{[1(R+1)]}]^{-1}\barxH^{[12]}}_{\mathbf{T^{[1]}}}\barxv^{[2]}_1)
\subset \text{span}(\left[\begin{array}{cccc}\barxV^{[3]}&\mathbf{0}& \cdots &\mathbf{0}\\
\mathbf{0}&\barxV^{[4]}&\cdots&\mathbf{0}\\ \vdots& \vdots & \ddots
&\vdots\\ \mathbf{0}& \mathbf{0} & \cdots & \barxV^{[R+1]}
\end{array} \right])\\
\end{eqnarray*}
This can be achieved if
\begin{eqnarray}\label{s1}
\mathbf{T}^{[1]}\barxv^{[2]}_1=\left[\begin{array}{c}\barxv^{[3]}_1\\
\barxv^{[4]}_1\\ \vdots\\ \barxv^{[R+1]}_1 \end{array}\right]
\end{eqnarray}
At Receiver $j$, $\forall j~2 \leq j\leq R+1$:
\begin{eqnarray*}
&\text{span}(\barxH^{[j(j+1)]}\barxv^{[j+1]}_1) \subset
\text{span}(\left[
\barxH^{[j1]}\barxV^{[1]}~\cdots~\barxH^{[j(j-1)]}\barxV^{[j-1]}~\barxH^{[j(j+2)]}\barxV^{[j+2]}~\cdots~\barxH^{[j(R+2)]}\barxV^{[R+2]}
\right])\\
&\Rightarrow
\text{span}(\underbrace{[\barxH^{[j1]}~\cdots~\barxH^{[j(j-1)]}~\barxH^{[j(j+2)]}~\cdots~\barxH^{[j(R+2)]}]^{-1}\barxH^{[j(j+1)]}}_{\mathbf{T}^{[j]}}\barxv^{[j+1]}_1)
\subset \\ &\text{span}( \begin{tiny} \left[\begin{array}{ccccccc}\barxV^{[1]}&\mathbf{0}& \cdots &\mathbf{0}& \mathbf{0}&\cdots&\mathbf{0}\\ \mathbf{0}&\barxV^{[2]}&\cdots&\mathbf{0}&\mathbf{0}&\cdots&\mathbf{0}\\
\vdots& \vdots & \ddots &\vdots &\vdots &\vdots &\vdots\\
\mathbf{0}&\mathbf{0}& \cdots &\barxV^{[j-1]}& \cdots & \cdots & \mathbf{0}\\
\mathbf{0}&\mathbf{0}& \cdots & \cdots & \barxV^{[j+2]}& \cdots &\mathbf{0}\\
\vdots& \vdots & \vdots &\vdots &\vdots &\ddots &\vdots\\
\mathbf{0}& \mathbf{0} & \cdots &\mathbf{0} &\mathbf{0} &\cdots &
\barxV^{[R+2]}
\\ \end{array} \right]) \end{tiny}
\end{eqnarray*}
This condition can be satisfied if
\begin{eqnarray}\label{s2}
\mathbf{T}^{[j]}\barxv^{[j+1]}_1=\left[\begin{array}{c}\barxv^{[1]}_{n(1,j)}\\
\vdots\\ \barxv^{[j-1]}_{n(j-1,j)}\\ \barxv^{[j+2]}_{n(j+2,j)}\\ \vdots\\
\barxv^{[j]}_{n(i,j)}\\ \vdots\\ \barxv^{[R+2]}_{n(R+2,j)}
\end{array}\right]
\end{eqnarray}
where
\begin{eqnarray*}
n(i,j)=\left\{\begin{array}{ccc}j-1&~&i=1,2,j>i\\j&~&i \geq
3,j<i-1\\j-2&~& i \geq 3, j \geq i+1\end{array} \right.
\end{eqnarray*}
At Receiver $R+2$:
\begin{eqnarray*}
\text{span}(\barxH^{[(R+2)1]}\barxv^{[1]}_1) \subset
\text{span}(\left[ \barxH^{[(R+2)2]}\barxV^{[2]}~\barxH^{[(R+2)3]}\barxV^{[3]}~\cdots~\barxH^{[(R+2)(R+1)]}\barxV^{[R+1]}\right])\\
\Rightarrow
\text{span}(\underbrace{[\barxH^{[(R+2)2]}~\barxH^{[(R+2)3]}~\cdots~\barxH^{[(R+2)(R+1)]}]^{-1}\barxH^{[(R+2)1]}}_{\mathbf{T}^{[R+2]}}\barxv^{[1]}_1)
\subset \text{span}(\left[\begin{array}{cccc}\barxV^{[2]}&\mathbf{0}& \cdots &\mathbf{0}\\
\mathbf{0}&\barxV^{[3]}&\cdots&\mathbf{0}\\ \vdots& \vdots & \ddots
&\vdots\\ \mathbf{0}& \mathbf{0} & \cdots & \barxV^{[R+1]}
\end{array} \right])\\
\end{eqnarray*}
This can be achieved if
\begin{eqnarray}\label{s3}
\mathbf{T}^{[R+2]}\barxv^{[1]}_1=\left[\begin{array}{c}\barxv^{[2]}_{R+1}\\
\barxv^{[3]}_R\\ \vdots\\ \barxv^{[R+1]}_R \end{array}\right]
\end{eqnarray}
Note that once we pick $\barxv^{[2]}_1$, all other vectors can be
solved from \eqref{s1}, \eqref{s2}, \eqref{s3}. $\barxv^{[2]}_1$ can
be chosen randomly according to a continuous distribution as long as
no entry of $\barxv^{[2]}_1$ is equal to zero. Note that the above
construction only specifies
$\barxv^{[i]}_1,\cdots,\barxv^{[i]}_{R}$,$\forall i=1,2,\ldots,R+2$
and $\barxv^{[2]}_{R+1}$. The remaining
$\barxv^{[i]}_{R+1},\cdots,\barxv^{[i]}_{d_i}$,$\forall
i=1,3,\ldots,R+2$ and $\barxv^{[2]}_{R+2},\cdots,\barxv^{[2]}_{d_2}$
can be chosen randomly from a continuous distribution. Since all the
vectors are chosen independently of the direct channel matrices
$\barxH^{[ii]}$ and all entries of $\barxV^{[i]}$ are not equal to
zero almost surely, the desired signal vectors are linearly
independent of the interference vectors at each receiver. As a
result, each receiver can decode its message by zero forcing the
interference to achieve $d_i$ degrees of freedom for a total of
$RM\lceil\frac{R+2}{M}\rceil+1$ degrees of freedom on the
$\lceil\frac{R+2}{M}\rceil$-symbol extension channel. Therefore,
$RM+\frac{1}{\lceil\frac{R+2}{M}\rceil}$ degrees of freedom per
channel use can be achieved on the original channel.


\begin{thebibliography}{1}

\bibitem{Carleial:75IT}
A.B. Carleial,
\newblock ``{A case where interference does not reduce capacity},''
\newblock {\em IEEE Trans. Inform. Theory}, vol. 21, pp. 569--570, Sep. 1975.

\bibitem{Sato:81IT}
H.~Sato,
\newblock ``{The capacity of the Gaussian interference channel under strong
  interference},''
\newblock {\em IEEE Trans. Inform. Theory}, vol. 27, pp. 786--788, Nov. 1981.

\bibitem{Han&Kobayashi:81IT}
T.S. Han and K.~Kobayashi,
\newblock ``{A new achievable rate region for the interference channel},''
\newblock {\em IEEE Trans. Inform. Theory}, vol. 27, pp. 49--60, Jan. 1981.

\bibitem{Costa:85IT}
M.H.M. Costa,
\newblock ``{On the Gaussian interference channel},''
\newblock {\em IEEE Trans. Inform. Theory}, vol. 31, pp. 607--615, Sept. 1985.


\bibitem{Sato:77IT}
H.~Sato,
\newblock ``{Two-user communication channels},''
\newblock {\em IEEE Trans. Inform. Theory}, vol. 23, pp. 295--304, May 1977.

\bibitem{Carleial:83IT}
A.B. Carleial,
\newblock ``{Outer bounds on the capacity of interference channels},''
\newblock {\em IEEE Trans. Inform. Theory}, vol. 29, pp. 602--606, July 1983.

\bibitem{Kramer:04IT}
G.~Kramer,
\newblock ``{Outer bounds on the capacity of Gaussian interference channels},''
\newblock {\em IEEE Trans. on Inform. Theory}, vol. 50, pp. 581--586, Mar.
  2004.

\bibitem{Etkin-etal:07IT_submission}
R.~H. Etkin, D.~N.~C. Tse, and H.~Wang,
\newblock ``{Gaussian interference channel capacity to within one bit},''
\newblock {\em submitted to the IEEE Trans. Inform. Theory}, 2007.

\bibitem{Telatar&Tse:07ISIT}
E.~Telatar and D.~Tse,
\newblock ``{Bounds on the capacity region of a class of interference
  channels},''
\newblock in {\em {Proc. IEEE International Symposium on Information Theory
  2007}}, Nice, France, Jun. 2007.

\bibitem{Shang-etal:06IT_submission}
X.~Shang, G.~Kramer, and B.~Chen,
\newblock ``{New Outer Bounds on the Capacity Region of Gaussian Interference Channels},''
\newblock {\em {Proc. IEEE International Symposium on Information Theory
  2008}},Toronto, ON, Canada, July, 2008


\bibitem{Vishwanath-Jafar:ITW}
S.~Vishwanath and S.~Jafar,
\newblock '' {On the Capacity of Vector Gaussian Interference
Channels},''
\newblock {\em {IEEE ITW}},2004.

\bibitem{Shang-etal:MIMO}
X.~Shang, B.~Chen, G.~Kramer, and H.~V.~Poor,
\newblock ``{On the Capacity of MIMO Interference Channels},''
\newblock {\em http://arxiv.org/abs/0807.1543}, 2008.

\bibitem{ParetoMISO}
{E. A. Jorswieck, E. G. Larsson and D.~Danev}, ``{Complete
Characterization of the Pareto Boundary for the MISO Interference
Channel},'' in {\em Proc. IEEE Trans. on Signal Processing}, Aug.
2008.

\bibitem{Nosratinia-Madsen}
A.~Host-Madsen and A.~Nosratinia, ``The multiplexing gain of
wireless networks,'' in {\em Proc. of ISIT}, 2005.

\bibitem{Jafar_dof_int}
S.~Jafar and M.~Fakhereddin, ``Degrees of freedom for the {MIMO}
interference channel,'' in {\em Proc. of ISIT}, 2006.

\bibitem{Cadambe_Jafar_int}
V.~R. Cadambe and S.~A. Jafar, ''Interference Alignment and Degrees
of Freedom of the  $K$-User Interference Channel,'' {\em IEEE Trans.
on Inform. Theory,} vol. 54, No.8, August 2008.

\bibitem{Jafar_Shamai}
S.~Jafar and S.~Shamai, ``Degrees of freedom region for the mimo {X}
channel,''
  {\em IEEE Trans. on Information Theory}, vol.~54, pp.~151--170, Jan. 2008.

\bibitem{Cadambe_Jafar_X}
V.~R. Cadambe and S.~A. Jafar, ``Degrees of freedom of wireless X
networks,''
  {\em arxiv.org}, 2007.
\newblock arxiv eprint = cs/0711.2824.

\bibitem{Gomadam_Cadambe_Jafar_dist}
K.~Gomadam, V.~Cadambe and S.~Jafar, ``Approaching the Capacity of
Wireless Networks through Distributed Interference Alignment,'' {\em
Preprint available through http://newport.eecs.uci.edu/syed,} March
2008.

\bibitem{Cadambe_Jafar_Shamai}
V.~Cadambe, S.~Jafar, and S.~Shamai, ``Interference alignment on the
  deterministic channel and application to gaussian networks,''
  \newblock arxiv:0711.2547.
\end{thebibliography}
\end{document}